\documentclass[prx,twocolumn,floatfix,superscriptaddress,longbibliography,notitlepage]{revtex4-1}

\usepackage{graphicx, color, graphpap}
\usepackage{enumitem}
\usepackage{amssymb}
\usepackage{amsthm}
\usepackage{dsfont}
\usepackage{mathtools}
\usepackage[dvipsnames]{xcolor}
\usepackage{multirow}
\usepackage[colorlinks=true,citecolor=blue,linkcolor=magenta]{hyperref}
\usepackage[T1]{fontenc}
\usepackage{bbm}
\usepackage{thmtools,thm-restate}
\usepackage{verbatim}
\usepackage{mathtools}
\usepackage{titlesec}
\usepackage{amsmath}

\usepackage[caption = false]{subfig}





\long\def\ca#1\cb{} 



\newcommand{\ket}[1]{|#1\rangle}               
\newcommand{\bra}[1]{\langle #1|}              
\newcommand{\dya}[1]{\ket{#1}\!\bra{#1}}
\newcommand{\matl}[3]{\langle #1|#2|#3\rangle} 






\newcommand{\poly}{\operatorname{poly}}

\newcommand{\AC}{\mathcal{A}}
\newcommand{\BC}{\mathcal{B}}

\newcommand{\DC}{\mathcal{D}}

\newcommand{\NC}{\mathcal{N}}

\newcommand{\PC}{\mathcal{P}}

\newcommand{\VC}{\mathcal{V}}
\newcommand{\WC}{\mathcal{W}}

\newcommand{\Tr}{{\rm Tr}}

\renewcommand{\geq}{\geqslant}
\renewcommand{\leq}{\leqslant}

\renewcommand{\vec}[1]{\boldsymbol{#1}}  

\newcommand{\ad}{^\dagger}

\newcommand*{\id}{\openone}



\newcommand{\thv}{\vec{\theta}}

{}
{}

\newtheorem{theorem}{Theorem}
\newtheorem{lemma}{Lemma}
\newtheorem{suplemma}{Supplementary Lemma}
\newtheorem{corollary}{Corollary}
\newtheorem{proposition}{Proposition}

\newtheorem{remark}{Remark}



\begin{document}

\title{Noise-induced barren plateaus in variational quantum algorithms}

\author{Samson Wang}
\email{samsonwang@outlook.com}
\affiliation{Theoretical Division, Los Alamos National Laboratory, Los Alamos, NM 87545, USA}
\affiliation{Imperial College London, London, UK}

\author{Enrico Fontana}
\affiliation{Theoretical Division, Los Alamos National Laboratory, Los Alamos, NM 87545, USA}
\affiliation{University of Strathclyde, Glasgow, UK}
\affiliation{National Physical Laboratory, Teddington, UK}

\author{M.~Cerezo}
\email{cerezo@lanl.gov}
\affiliation{Theoretical Division, Los Alamos National Laboratory, Los Alamos, NM 87545, USA}
\affiliation{Center for Nonlinear Studies, Los Alamos National Laboratory, Los Alamos, NM, USA
}

\author{Kunal~Sharma} 
\address{Theoretical Division, Los Alamos National Laboratory, Los Alamos, NM 87545, USA}
\address{Hearne Institute for Theoretical Physics and Department of Physics and Astronomy, Louisiana State University, Baton Rouge, LA USA}

\author{Akira Sone}
\affiliation{Theoretical Division, Los Alamos National Laboratory, Los Alamos, NM 87545, USA}
\affiliation{Center for Nonlinear Studies, Los Alamos National Laboratory, Los Alamos, NM, USA
}

\author{Lukasz Cincio}
\affiliation{Theoretical Division, Los Alamos National Laboratory, Los Alamos, NM 87545, USA}

\author{Patrick J. Coles}
\email{pcoles@lanl.gov}
\affiliation{Theoretical Division, Los Alamos National Laboratory, Los Alamos, NM 87545, USA}

\begin{abstract}
Variational Quantum Algorithms (VQAs) may be a path to quantum advantage on Noisy Intermediate-Scale Quantum (NISQ) computers. A natural question is whether noise on NISQ devices places fundamental limitations on VQA performance. We rigorously prove a serious limitation for noisy VQAs, in that the noise causes the training landscape to have a barren plateau (i.e., vanishing gradient). Specifically, for the local Pauli noise considered, we prove that the gradient vanishes exponentially in the number of qubits $n$ if the depth of the ansatz grows linearly with $n$. These noise-induced barren plateaus (NIBPs) are conceptually different from noise-free barren plateaus, which are linked to random parameter initialization. Our result is formulated for a generic ansatz that includes as special cases the Quantum Alternating Operator Ansatz and the Unitary Coupled Cluster Ansatz, among others. For the former, our numerical heuristics demonstrate the NIBP phenomenon for a realistic hardware noise model.
\end{abstract}

\maketitle

\section{Introduction}

One of the great unanswered technological questions is whether Noisy Intermediate Scale Quantum (NISQ) computers will yield a quantum advantage for tasks of practical interest~\cite{preskill2018quantum}. At the heart of this discussion are Variational Quantum Algorithms (VQAs), which are believed to be the best hope for near-term quantum advantage~\cite{cerezo2020variationalreview,endo2020hybrid,bharti2021noisy}. Such algorithms leverage classical optimizers to train the parameters in a quantum circuit, while employing a quantum device to efficiently estimate an application-specific cost function or its gradient. By keeping the quantum circuit depth relatively short, VQAs mitigate hardware noise and may enable near-term applications including electronic structure~\cite{VQE,mcclean2016theory,bauer2016hybrid,jones2019variational}, dynamics~\cite{li2017efficient,cirstoiu2020variational,heya2019subspace,yuan2019theory}, optimization~\cite{qaoa2014,qaoaMaxCut2018,Crooks_2018,hadfield2019quantum}, linear systems~\cite{bravo-prieto2019,Xiaosi}, metrology~\cite{koczor2019variational, meyer2020variational}, factoring~\cite{Anschuetz2019Factoring}, compiling~\cite{QAQC,sharma2019noise,jones2018quantum}, and others~\cite{arrasmith2019variational,cerezo2019variational,cerezo2020variational,larose2019variational,verdon2019quantumHamiltonian,johnson2017qvector}.

The main open question for VQAs is their scalability to large problem sizes. While performing numerical heuristics for small or intermediate problem sizes is the norm for VQAs, deriving analytical scaling results is rare for this field. Noteworthy exceptions are some recent studies of the scaling of the gradient in VQAs with the number of qubits $n$~\cite{mcclean2018barren,holmes2021connecting,sharma2020trainability,cerezo2020cost,marrero2020entanglement,patti2020entanglement,volkoff2021large,cerezo2021higher,arrasmith2020effect}. For example, it was proven that the gradient vanishes exponentially in $n$ for randomly initialized, deep Hardware Efficient ansatzes~\cite{mcclean2018barren,holmes2021connecting} and dissipative quantum neural networks~\cite{sharma2020trainability}, and also for shallow depth with global cost functions~\cite{cerezo2020cost}. This vanishing gradient phenomenon is now referred to as barren plateaus in the training landscape.
Barren plateaus imply that in order to resolve gradients to a fixed precision, on average, an exponential number of shots need to be invested. This places an exponential resource burden on the training process of VQAs. Further, such effects are not avoided by adopting optimizers that use information about higher order derivatives \cite{cerezo2021higher} or gradient-free methods \cite{arrasmith2020effect}.  
Fortunately, investigations into barren plateaus have spawned several promising strategies to avoid them, including local cost functions~\cite{cerezo2020cost,uvarov2020barren}, parameter correlation~\cite{volkoff2021large}, pre-training~\cite{verdon2019learning}, and layer-by-layer training~\cite{grant2019initialization,skolik2020layerwise}. Such strategies give hope that perhaps VQAs may avoid the exponential scaling that otherwise would result from the exponential precision requirements of navigating through a barren plateau.

However, these works do not consider quantum hardware noise, and very little is known about the scalability of VQAs in the presence of noise. One of the main selling points of VQAs is noise mitigation, and indeed VQAs have shown evidence of optimal parameter resilience to noise in the sense that the global minimum of the landscape may be unaffected by noise~\cite{sharma2019noise,mcclean2016theory}. While some analysis has been done~\cite{xue2019effects, marshall2020characterizing,gentini2019noise}, an important open question, which has not yet been addressed, is how noise affects the asymptotic scaling of VQAs. More specifically, one can ask how noise affects the training process. If the effect of noise on trainability is not severe, and the optimal parameters can be found, then VQAs may be useful even in the presence of high decoherence in one of two ways. First, the end goal of certain algorithms such as the Quantum Approximate Optimization Algorithm (QAOA) \cite{farhi2016quantum} is to extract an optimized set of parameters, rather than the optimal cost value. 
Second, error mitigation can be used in conjunction with VQAs that display optimal parameter resilience. Intuitively, incoherent noise is expected to reduce the magnitude of the gradient and hence hinder trainability, and preliminary numerical evidence of this has been seen~\cite{kubler2020adaptive, arrasmith2020operator}, although the scaling of this effect has not been studied.

In this work, we analytically study the scaling of gradient for VQAs as a function of $n$, the circuit depth $L$, and a noise parameter $q<1$. We consider a general class of local noise models that includes depolarizing noise and certain kinds of Pauli noise. Furthermore, we investigate a general, abstract ansatz that allows us to encompass many of the important ansatzes in the literature, hence allowing us to make a general statement about VQAs. 
This includes the Quantum Alternating Operator Ansatz (QAOA) which is used for solving combinatorial optimization problems~\cite{qaoa2014,qaoaMaxCut2018,Crooks_2018,hadfield2019quantum} and the Unitary Coupled Cluster (UCC) Ansatz which is used in the Variational Quantum Eigensolver (VQE) to solve chemistry problems ~\cite{cao2019quantum,bartlett2007coupled,lee2018generalized}. This is also applicable for the Hardware Efficient Ansatz and the Hamiltonian Variational Ansatz (HVA) which are employed for various applications~\cite{kandala2017hardware,arute2020hartree,arute2020quantum,wecker2015progress,wiersema2020exploring}.
Our results also generalize to settings that allow for multiple input states or training data, as in machine learning applications, often called quantum neural networks~\cite{schuld2014quest,schuld2015introduction,biamonte2017quantum,beer2020training,abbas2020power}.

Our main result (Theorem~\ref{thm1}) is an upper bound on the magnitude of the gradient that decays exponentially with $L$, 
namely as $2^{-\kappa}$ with $\kappa = -L\log_2(q)$. Hence, we find that the gradient vanishes exponentially in the circuit depth. Moreover, it is typical to consider $L$ scaling as $\poly(n)$ (e.g., in the UCC Ansatz~\cite{lee2018generalized}), for which our main result implies an exponential decay of the gradient in $n$. We refer to this as a Noise-Induced Barren Plateau (NIBP). We remark that NIBPs can be viewed as concomitant to the cost landscape concentrating around the value of the cost for the maximally mixed state, 
and we make this precise in Lemma~\ref{lemma1}. See Fig.~\ref{fig:landscapes} for a schematic diagram of the NIBP phenomenon.

\begin{figure}[t]
    \centering
    \includegraphics[width=.9\columnwidth]{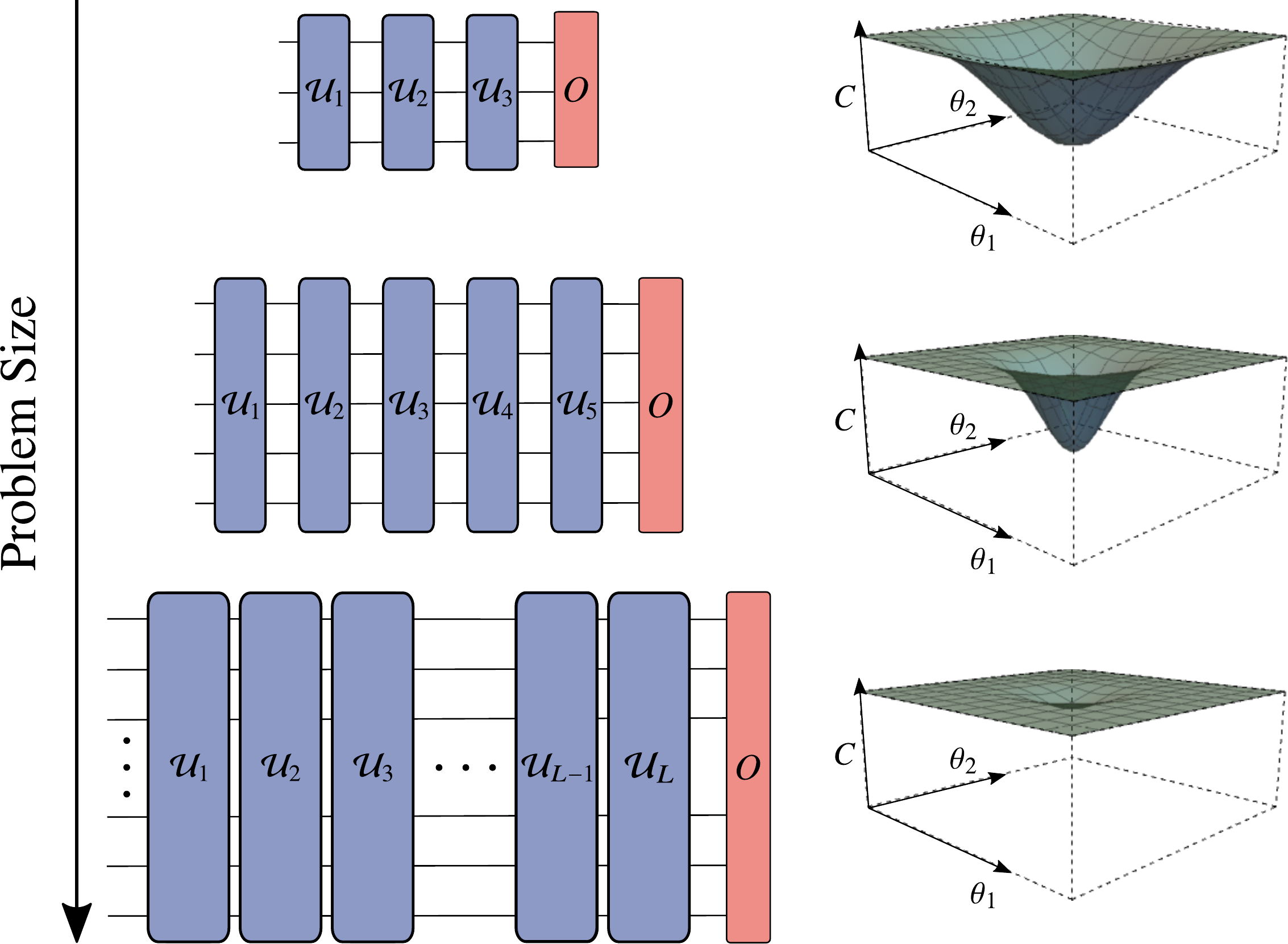}
    \caption{\textbf{Schematic diagram of the Noise-Induced Barren Plateau (NIBP) phenomenon.} For various applications such as chemistry and optimization, increasing the problem size often requires one to increase the depth $L$ of the variational ansatz. We show that, in the presence of local noise, the gradient vanishes exponentially in $L$ and hence exponentially in the number of qubits $n$ when $L$ scales linearly in $n$. This can be seen in the plots on the right, which show the cost function landscapes for a simple variational problem with local noise.  }
    \label{fig:landscapes}
\end{figure}

To be clear, any variational algorithm with a NIBP will have exponential scaling. In this sense, NIBPs destroy quantum speedup, as the standard goal of quantum algorithms is to avoid the typical exponential scaling of classical algorithms. NIBPs are conceptually distinct from the noise-free barren plateaus of Refs.~\cite{mcclean2018barren,sharma2020trainability,cerezo2020cost,marrero2020entanglement,patti2020entanglement,holmes2021connecting}. Indeed, strategies to avoid noise-free barren plateaus~\cite{cerezo2020cost,uvarov2020barren,volkoff2021large,verdon2019learning,grant2019initialization,skolik2020layerwise} do not appear to solve the NIBPs issue. 
 
The obvious strategy to address NIBPs is to reduce circuit complexity, or more precisely, to reduce the circuit depth. Hence, our work provides quantitative guidance for how small $L$ needs to be to potentially avoid NIBPs.

In what follows, we present our general framework followed by our main result. We also present two extensions of our main result, one involving correlated ansatz parameters and one allowing for measurement noise. The latter indicates that global cost functions exacerbate the NIBP issue. In addition, we provide numerical heuristics that illustrate our main result for MaxCut optimization with the QAOA, and an implementation of the HVA on superconducting hardware, both showing that NIBPs significantly impact this application.

\section{Results}

\subsection{General framework}\label{sec:gen-framework}

In this work we analyze a general class of parameterized ansatzes $U(\thv)$ that can be expressed as a product of $L$ unitaries sequentially applied by layers
\begin{equation}\label{eq:ansatz}
U(\thv) = U_L(\thv_L)\cdots U_2(\thv_2) \cdot U_1(\thv_1)\,.
\end{equation}
Here $\thv=\{\thv_l\}_{l=1}^L$ is a set of vectors of continuous parameters that are optimized to minimize a cost function $C$ that can be expressed as the expectation value of an operator $O$:  
\begin{equation}\label{eq:cost}
    C=\Tr[O U(\thv)\rho U\ad(\thv)]\,.
\end{equation}
As shown in Fig.~\ref{fig:setting}, $\rho$ is an $n$-qubit input state. Without loss of generality we assume that each $U_l(\thv_l)$ is given by 
\begin{equation}\label{eq:layerunitary}
    U_l(\thv_l)= \prod_{m} e^{-i \theta_{l m} H_{l m}} W_{l m}\,,
\end{equation}
where  $H_{l m}$ are Hermitian operators,  $\thv_l=\{\theta_{l m}\}$ are continuous parameters, and  $W_{l m}$ denote unparametrized gates. We expand  $H_{l m}$ and $O$ in the Pauli basis as
\begin{equation}\label{eq:HamiltonianDef}
    H_{l m} =\vec{\eta}_{l m} \cdot \vec{\sigma}_n= \sum_{i} \eta^{i}_{l m}\sigma_n^{i}\,,\quad O =\vec{\omega} \cdot \vec{\sigma}_n= \sum_{i} \omega^{i}\sigma_n^{i}\,,
\end{equation}
where $\sigma_n^{i}\in \{\id,X,Y,Z\}^{\otimes n}$   are Pauli strings, and   $\vec{\eta}_{l m}$ and $\vec{\omega}$ are real-valued vectors that specify the terms present in the expansion. Defining  $N_{l m}=\vert \vec{\eta}_{l m} \vert$ and $N_O=\vert \vec{\omega} \vert$ as the number of non-zero elements, i.e., the number of terms in the summations in Eq.~\eqref{eq:HamiltonianDef}, we say that $H_{l m}$ and $O$ admit  an   efficient Pauli decomposition if   $N_{l m},N_O\in \mathcal{O}( \poly (n))$, respectively.

We now briefly discuss how the  QAOA, UCC, and Hardware Efficient ansatzes fit into this general framework. We refer the reader to the Methods for additional details. In QAOA one sequentially alternates the action of two unitaries as    
\begin{equation}\label{eq:QAOA}
    U(\vec{\gamma},\vec{\beta})= e^{-i \beta_{p} H_M}  e^{-i \gamma_{p} H_P}\cdots e^{-i \beta_{1} H_M}  e^{-i \gamma_{1} H_P} \,,
\end{equation}
where $H_P$ and $H_M$ are the so-called problem and mixer Hamiltonian, respectively. We define $N_P\,(N_M)$ as the number of terms in the Pauli decomposition of $H_P\,(H_M)$.  
On the other hand,  Hardware Efficient ansatzes naturally fit into Eqs.~\eqref{eq:ansatz}--\eqref{eq:layerunitary} as they are usually composed of fixed gates (e.g, controlled NOTs),  and  parametrized gates (e.g., single qubit rotations). 
Finally, as detailed in the Methods, the UCC ansatz can be expressed as
\begin{equation}\label{eq:ucc}
        U(\vec{\theta}) =  \prod_{l m} U_{l m}(\theta_{l m})= \prod_{l m} e^{i  \theta_{l m} \sum_i   \mu^i_{l m}  \sigma^i_n}, 
\end{equation}
where $\mu^i_{l m} \in \{0,\pm1\}$, and where $\theta_{l m}$ are the coupled cluster amplitudes. 
Moreover, we denote $\widehat{N}_{l m} =|\vec{\mu}_{l m}|$ as the number of non-zero elements in  $\sum_i  \mu^i_{l m}  \sigma^i_n$.

As shown in Fig.~\ref{fig:setting}, we consider a noise model where local Pauli noise channels $\mathcal{N}_j$ act on each qubit $j$ before and after each unitary $U_l(\thv_l)$. The action of $\mathcal{N}_j$ on a local Pauli operator $\sigma\in\{X,Y,Z\}$  can be expressed as 
\begin{equation}\label{eq:noisemodel}
    \mathcal{N}_j(\sigma)=q_{\sigma}\sigma\,,
\end{equation}
where $-1< q_X,q_Y,q_Z<1$. Here, we characterize the noise strength with a single parameter $q=\max\{|q_X|,|q_Y|,|q_Z|\}$. Let $\mathcal{U}_l$ denote the channel that implements the unitary $U_l(\thv_l)$ and  let $\mathcal{N}=\mathcal{N}_1\otimes\cdots\otimes\mathcal{N}_n$ denote the $n$-qubit noise channel. Then, the noisy cost function is given by
\begin{equation} \label{eq:noisycost}
     \widetilde{C} = \Tr\left[ O\, \big(\mathcal{N}\circ \mathcal{U}_L\circ \cdots \circ \mathcal{N}\circ \mathcal{U}_1\circ \mathcal{N}\big) (\rho) \right]\,.
\end{equation}

\subsection{General analytical results}\label{sec:analytical-results}

\begin{figure}[t]
    \centering
    \includegraphics[width=.9\columnwidth]{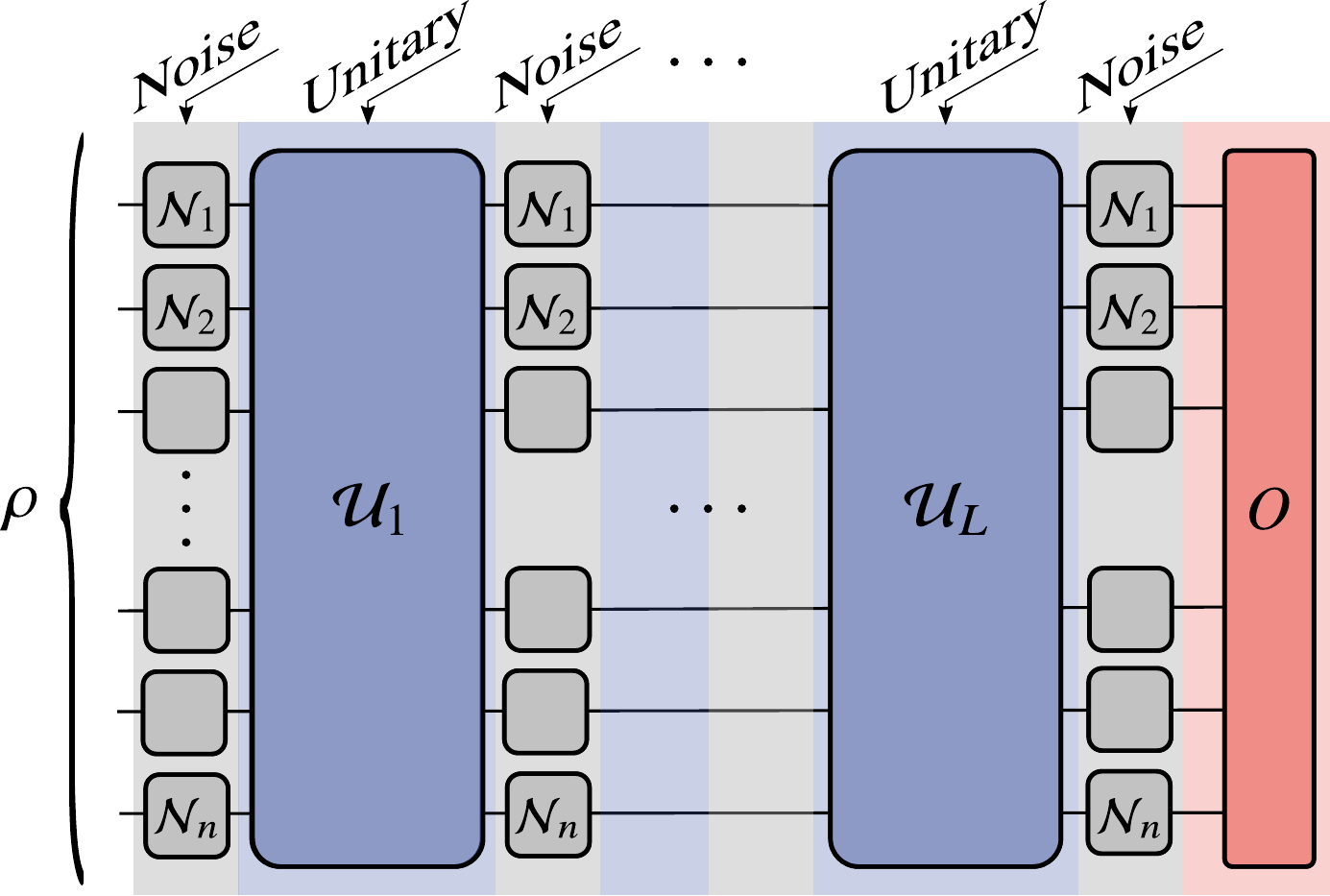}
    \caption{\textbf{Setting for our analysis.} An $n$-qubit input state $\rho$ is sent through a variational ansatz $U(\thv)$ composed of $L$ unitary layers $U_l(\thv_l)$ sequentially acting according to Eq.~\eqref{eq:ansatz}. Here,  $\mathcal{U}_l$ denotes the quantum channel that implements the unitary $U_l(\thv_l)$.   The parameters in the ansatz $\thv=\{\thv_l\}_{l=1}^L$ are trained to minimize a cost function that is expressed as the expectation value of an operator $O$ as in Eq.~\eqref{eq:cost}. We consider a noise model where local Pauli noise channels $\mathcal{N}_j$ act on each qubit $j$ before and after each unitary. }
    \label{fig:setting}
\end{figure}


There are some VQAs, such as the VQE~\cite{VQE} for chemistry and other physical systems, where it is important to accurately characterize the value of the cost function itself. We provide an important result below in Lemma~\ref{lemma1} that quantitatively bounds the cost function itself, and we envision that this bound will be especially useful in the context of VQE. On the other hand, there are other VQAs, such as those for optimization~\cite{qaoa2014,qaoaMaxCut2018,Crooks_2018,hadfield2019quantum}, compiling~\cite{QAQC,sharma2019noise,jones2018quantum}, and linear systems~\cite{bravo-prieto2019,Xiaosi}, where the key goal is to learn the optimal parameters and the precise value of the cost function is either not important or can be computed classically after learning the parameters. In this case, one is primarily concerned with trainability, and hence the gradient is a key quantity of interest. These applications motivate our main result in Theorem~\ref{thm1}, which bounds the magnitude of the gradient. We remark that trainability is of course also important for VQE, and hence Theorem~\ref{thm1} is also of interest for this application.

With this motivation in mind, we now present our main results. We first present our bound on the cost function, since one can view this as a phenomenon that naturally accompanies our main theorem. Namely, in the following lemma, we show that the noisy cost function concentrates around the corresponding value for the maximally mixed state.

\begin{lemma}[Concentration of the cost function] \label{lemma1}
  Consider an $L$-layered ansatz of the form in Eq.~\eqref{eq:ansatz}. Suppose that local Pauli noise of the form of Eq.~\eqref{eq:noisemodel} with noise strength $q$ acts before and after each layer as in Fig.~\ref{fig:setting}. Then, for a cost function $\widetilde{C}$ of the form in Eq.~\eqref{eq:noisycost}, the following bound holds
  \begin{equation}
      \Big\vert \widetilde{C} - \frac{1}{2^n}\Tr[O]\Big\vert\, \leq\, G(n) \,,
  \end{equation}
  where
  \begin{equation}
      G(n) = \sqrt{2\, \ln 2}\, N_O \|\vec{\omega}\|_\infty\,n^{1/2} q^{cL+1}\,.
  \end{equation}
  Here $\|\cdot\|_\infty$ is the infinity norm, $\|\cdot \|_1$ is the trace norm,
  $\vec{\omega}$ is defined in Eq.~\eqref{eq:HamiltonianDef}, and $N_O$ is the number of non-zero elements in the Pauli decomposition of $O$, and $c=1/(2\ln 2)$ is a constant.
\end{lemma}

This lemma implies the cost landscape exponentially concentrates on the value $\Tr [O]/2^n$ for large $n$, whenever the number of layers $L$ scales linearly with the number of qubits.
While this lemma has important applications on its own, particularly for VQE, it also provides intuition for the NIBP phenomenon, which we now state.

Let $\partial_{l m}\widetilde{C}=\partial \widetilde{C}/ \partial {\theta_{l m}}$ denote the partial derivative of the noisy cost function with respect to the $m$-th parameter that appears in the $l$-th layer  of the ansatz, as in Eq.~\eqref{eq:layerunitary}. For our main technical result, we upper bound $|\partial_{l m}\widetilde{C}|$ as a function of $L$ and $n$.

\begin{theorem}[Upper bound on the partial derivative] \label{thm1}
    Consider an $L$-layered ansatz as defined in Eq.~\eqref{eq:ansatz}. Let $\theta_{l m}$ denote the trainable parameter corresponding to the Hamiltonian $H_{l m}$ in the unitary $U_l(\vec{\theta}_l)$ appearing in the ansatz. Suppose that local Pauli noise of the form in Eq.~\eqref{eq:noisemodel} with noise parameter $q$ acts before and after each layer as in Fig.~\ref{fig:setting}. Then the following bound holds for the partial derivative of the noisy cost function
    \begin{equation}\label{eq:thm1}
       |\partial_{l m} \widetilde{C}|  \leq F(n)\,,
       \end{equation}
       where
    \begin{equation}\label{eq:bound-thm}
         F(n)= 
         \sqrt{8\ln{2}}\, N_O \|\vec{\omega}\|_\infty \big\|\vec{\eta}_{lm}\big\|_1  n^{1/2} q^{cL+1}\,, 
    \end{equation}
    and $\vec{\eta}_{lm}$ and  $\vec{\omega}$ are defined in Eq.~\eqref{eq:HamiltonianDef}, and $N_O$ is the number of non-zero Pauli terms in $O$, and $c=1/(2\ln 2)$ is a constant.
\end{theorem}

Let us now consider the asymptotic scaling of the function $F(n)$ in Eq.~\eqref{eq:bound-thm}. 
Under standard assumptions that $H_{l m}$ and $O$ in Eq.~\eqref{eq:HamiltonianDef} admit efficient Pauli decompositions, 
we now state that $F(n)$ decays exponentially in $n$, if $L$ grows linearly in $n$. 

\begin{corollary}[Noise-induced barren plateaus] \label{cor1}
     Let $N_{l m},N_O\in \mathcal{O}(\poly (n))$ and let ${\eta}^{i}_{l m},{\omega^j}\in \mathcal{O}(\poly (n))$ for all $i,j$. Then the upper bound $F(n)$ in Eq.~\eqref{eq:bound-thm} vanishes exponentially in $n$ as
    \begin{equation}\label{eqnFnScaling} 
        F(n) \in \mathcal{O}(2^{-\alpha n})\,,
    \end{equation}
    for some positive constant $\alpha$ if we have
    \begin{equation} \label{eq:Lcondition}
        L \in \Omega(n)\,.
    \end{equation}
\end{corollary}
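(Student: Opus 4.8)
The plan is to start from the explicit bound $F(n)$ supplied by Theorem~\ref{thm1} and reduce Eq.~\eqref{eqnFnScaling} to an elementary comparison of exponents after taking a base-$2$ logarithm. Writing the right-hand side of Eq.~\eqref{eq:bound-thm} as a polynomial prefactor times the exponential factor $2^{n/2}q^{L+1}$, I would take $\log_2$ to obtain
\begin{equation}
\log_2 F(n) = \log_2\!\big(\sqrt{2}\,N_{l m}N_O\|\vec{\eta}_{l m}\|_\infty\|\vec{\omega}\|_\infty\big) + \frac{n}{2} + (L+1)\log_2 q\,.
\end{equation}
By definition, $F(n)\in\mathcal{O}(2^{-\alpha n})$ is equivalent to the statement that $\log_2 F(n)\leq -\alpha n+\mathrm{const}$ for all sufficiently large $n$, so it suffices to control the three terms on the right.

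First I would dispatch the prefactor. Under the hypotheses $N_{l m},N_O\in\mathcal{O}(\poly(n))$ and $\eta^i_{l m},\omega^j\in\mathcal{O}(\poly(n))$, the infinity norms $\|\vec{\eta}_{l m}\|_\infty$ and $\|\vec{\omega}\|_\infty$ are themselves $\mathcal{O}(\poly(n))$, so the logarithm of the full prefactor is $\mathcal{O}(\log n)=o(n)$. Being subexponential, it does not affect the exponential decay rate and may be absorbed into the $\mathcal{O}$-notation together with the constants $c$ and $n_0$; at the level of the exact rate this costs at most an arbitrarily small margin in the exponent. Thus the asymptotics are governed entirely by the two genuinely linear terms, $n/2$ and $(L+1)\log_2 q$.

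The heart of the argument is then to solve $\tfrac{n}{2}+(L+1)\log_2 q\leq -\alpha n+\mathrm{const}$ for $L$. The one point demanding genuine care is the sign of $\log_2 q$: since $0<q<1$ we have $\log_2 q<0$, so after isolating $(L+1)\log_2 q$ and dividing through by $\log_2 q$ the inequality reverses, converting the required upper bound on the exponent into a lower bound on the depth $L$ that grows linearly in $n$. Collecting the absorbed $\mathcal{O}(\log n)$ prefactor and the additive constant into a single $c$, and the threshold into $n_0$, then reproduces the linear-growth condition~\eqref{eq:Lcondition}.

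I expect no deep obstacle: the analytic content is already carried by Theorem~\ref{thm1}, and what remains is bookkeeping. The two places to be careful are (i) confirming that $\log_2$ of the polynomial prefactor is $o(n)$, so that it may be absorbed rather than altering the slope of the threshold, and (ii) correctly tracking the reversal of the inequality when dividing by the negative quantity $\log_2 q$, since this is precisely what fixes both the sign of the coefficient of $n$ and the direction of the bound in Eq.~\eqref{eq:Lcondition}.
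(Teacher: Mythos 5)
Your proof is correct and follows essentially the same route as the paper, which offers no separate argument for Corollary~\ref{cor1} beyond declaring it a direct application of Theorem~\ref{thm1}; your base-$2$ logarithm comparison, absorption of the $\mathcal{O}(\log n)$ polynomial prefactor, and solve-for-$L$ step constitute exactly that bookkeeping. Note that the threshold you actually derive, $L \geq \big(\alpha + \tfrac{1}{2}\big)n/\log_2(1/q) - c$, is the meaningful (intended) reading of Eq.~\eqref{eq:Lcondition}: as printed, the factor $1/\log_2(q)$ is negative for $0<q<1$, which would make the stated condition vacuous, so your careful tracking of the inequality reversal when dividing by $\log_2 q<0$ is, if anything, more precise than the paper's own statement.
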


The asymptotic scaling in Eq.~\eqref{eqnFnScaling} is independent of $l$ and $m$, i.e., the scaling is blind to the layer, or the parameter within the layer, for which the derivative is taken. 
This corollary implies that when  Eq.~\eqref{eq:Lcondition} holds, i.e. $L$ grows at least linearly in $n$, the partial derivative $|\partial_{l m} \widetilde{C}|$ exponentially vanishes in $n$ across the entire cost landscape. In other words, one observes a Noise-Induced Barren Plateau (NIBP). We note that Eq.~\eqref{eq:Lcondition} is satisfied for all $q<1$. That is, NIBPs occur regardless of the noise strength, it only changes the severity of the exponential scaling.

In addition, Corollary~\ref{cor1} implies that NIBPs are conceptually different from noise-free barren plateaus. First, NIBPs are independent of the parameter initialization strategy or the locality of the cost function. Second, NIBPs exhibit exponential decay of the gradient itself; not just of the variance of the gradient, which is the hallmark of noise-free barren plateaus. Noise-free barren plateaus allow the global minimum to sit inside deep, narrow valley in the landscape~\cite{cerezo2020cost}, whereas NIBPs flatten the entire landscape.

One of the strategies to avoid the noise-free barren plateaus is to correlate parameters, i.e., to make a subset of the parameters equal to each other~\cite{volkoff2021large}. We generalize Theorem \ref{thm1} in the following remark to accommodate such a setting, consequently showing that such correlated or degenerate parameters do not help in avoiding NIBPs. In this setting, the result we obtain in Eq.~\eqref{eqnDegenerateResult} below is essentially identical to that in Eq.~\eqref{eq:bound-thm} except with an additional factor quantifying the amount of degeneracy.

\begin{remark}[Degenerate parameters]\label{remark1}
   Consider the ansatz defined in Eqs.~\eqref{eq:ansatz} and \eqref{eq:layerunitary}. Suppose there is a subset $G_{st}$ of the set $\{\theta_{l m}\}$ in this ansatz such that $G_{st}$ consists of $g$ parameters that are degenerate: 
\begin{equation}
    G_{st} = \big\{  \theta_{l m} \;|\; \theta_{l m}=\theta_{st} \big\}\,.
\end{equation}    
Here, $\theta_{st}$ denotes the parameter in $G_{st}$ for which $N_{l m}  \|\vec{\eta}_{l m} \|_\infty$ takes the largest value in the set. ($\theta_{st}$ can also be thought of as a reference parameter to which all other parameters are set equal in value.) Then the partial derivative of the noisy cost with respect to $\theta_{st}$ is bounded as
\begin{equation}\label{eqnDegenerateResult}
    |\partial_{st} \widetilde{C}| \leq \sqrt{8\ln{2}}\,g\, N_O \|\vec{\omega}\|_\infty \|\vec{\eta}_{lm}\|_\infty  n^{1/2} q^{cL+1}, 
\end{equation}
at all points in the cost landscape.
\end{remark}

Remark~\ref{remark1} is especially important in the context of the QAOA and the UCC ansatz, as discussed below. We note that, in the general case, a unitary of the form of Eq.~\eqref{eq:layerunitary} cannot be implemented as a single gate on a physical device. In practice one needs to compile the unitary into a sequence of native gates. Moreover, Hamiltonians with non-commuting terms are usually approximated with techniques such as Trotterization. This compilation overhead potentially leads to a sequence of gates that grows with $n$. Remark \ref{remark1} enables us to account for such scenarios, and we elaborate on its relevance to specific applications in the next subsection.

In reality, noise on quantum hardware can be non-local. For instance in certain systems one can have cross-talk errors or coherent errors. We address such extensions to our noise model in the following remark.

\begin{remark}[Extensions to the noise model]\label{remark:extension_noisemodel}
    Consider a modification to each layer of noise $\mathcal{N}$ in Eq.~\eqref{eq:noisycost}   to include additional $k$-local Pauli noise and correlated coherent (unitary) noise across multiple qubits. Under such extensions to the noise model, we obtain the same scaling results as those obtained in Lemma \ref{lemma1} and Theorem~\ref{thm1}. We discuss this in more detail in Supplementary Note \hyperref[sec:remark2-supp]{5}.
\end{remark}

Finally, we present an extension of our main result to the case of measurement noise. Consider a model of measurement noise where each local measurement independently has some bit-flip probability given by $(1-q_M)/2$, which we assume to be symmetric with respect to the 0 and 1 outcomes. This leads to an additional reduction of our bounds on the cost function and its gradient that depends on the locality of the observable $O$.

\begin{proposition}[Measurement noise] \label{prop1}
    Consider expanding the observable $O$ as a sum of Pauli strings, as in Eq.~\eqref{eq:HamiltonianDef}. Let $w$ denote the minimum weight of these strings, where the weight is defined as the number of non-identity elements for a given string. In addition to the noise process considered in Fig.~\ref{fig:setting}, suppose there is also measurement noise consisting of a tensor product of local bit-flip channels with bit-flip probability $(1-q_M)/2$. Then we have
    \begin{equation}\label{eq:prop-b1}
        \left\vert\widetilde{C} - \frac{1}{2^n}\Tr[O]\right\vert \leq q_M^{w}\, G(n) \,,
    \end{equation}
    and
    \begin{equation}
        |\partial_{l m} \widetilde{C}| \leq q_M^{w} F(n) \,,
    \end{equation}
    where $G(n)$ and $F(n)$ are defined in Lemma~\ref{lemma1} and Theorem~\ref{thm1}, respectively.
\end{proposition}

Proposition~\ref{prop1} goes beyond the noise model considered in Theorem~\ref{thm1}. It shows that in the presence of measurement noise there is an additional contribution from the locality of the measurement operator. It is interesting to draw a parallel between Proposition~\ref{prop1} and noise-free barren plateaus, which have been shown to be cost-function dependent and in particular depend on the locality of the observable $O$~\cite{cerezo2020cost}. The bounds in Proposition~\ref{prop1} similarly depend on the locality of $O$. For example, when  $w=n$, i.e., global observables, the factor $q_M^w$ will hasten the exponential decay. On the other hand, when $w=1$, i.e., local observables, the scaling is unaltered by measurement noise. In this sense, a global observable exacerbates the NIBP issue by making the decay more rapid with $n$.

\subsection{Application-specific analytical results}

Here we investigate the implications of our results from the previous subsection for two applications: optimization and chemistry. In particular, we derive explicit conditions for NIBPs for these applications. 
These conditions are derived in the setting where Trotterization is used, but other compilation strategies incur similar asymptotic behavior.
We begin with the QAOA for optimization and then discuss the UCC ansatz for chemistry. Finally, we make a remark about the Hamiltonian Variational Ansatz (HVA),
as well as remark that our results also apply to a generalized cost function that can employ training data. 

\begin{corollary}[Example: QAOA]
\label{cor:qaoa}
Consider the QAOA with $2p$ trainable parameters, as defined in Eq.~\eqref{eq:QAOA}. Suppose that the implementation of unitaries corresponding to the problem Hamiltonian $H_P$ and the mixer Hamiltonian $H_M$ require $k_P$- and $k_M$-depth circuits, respectively.
If local Pauli noise of the form in Eq.~\eqref{eq:noisemodel} with noise parameter $q$ acts before and after each layer of native gates, then we have
\small
    \begin{align}
        |\partial_{\beta_l} \widetilde{C}| \, &\leq \sqrt{8\ln{2}}\, {g_{l,P}}N_P \|\vec{\omega}\|_\infty \big\|\vec{\eta}_{P} \big\|_1  n^{1/2} q^{c(k_P+k_M)p+1}, \label{eq:qaoabound1}\\
        |\partial_{\gamma_l} \widetilde{C}| \, &\leq  \sqrt{8\ln{2}}\, {g_{l,M}}N_P   \|\vec{\omega} \|_\infty \big\|\vec{\eta}_{P} \big\|_1 \,n^{1/2}  q^{c(k_P+k_M)p+1}, \label{eq:qaoabound2}
    \end{align}
\normalsize
    for any choice of parameters $\beta_l,\gamma_l$, and where $O=H_P$ in Eq.~\eqref{eq:cost}. Here $g_{l,P}$ and $g_{l,M}$ are the respective number of native gates parameterized by $\beta_l$ and $\gamma_l$ according to the compilation.
\end{corollary}

Corollary \ref{cor:qaoa} follows from Remark \ref{remark1} and it has interesting implications for the trainability of the QAOA. From Eqs.~\eqref{eq:qaoabound1} and \eqref{eq:qaoabound2}, NIBPs are guaranteed if $pk_P$ scales linearly in $n$. This can manifest itself in a number of ways, which we explain below. 

First, we look at the depth $k_P$ required to implement one application of the problem unitary. Graph problems containing vertices of extensive degree such as the Sherrington-Kirkpatrick model inherently require $\Omega(n)$ depth circuits to implement \cite{arute2020quantum}. On the other hand, generic problems mapped to hardware topologies also have the potential to incur $\Omega(n)$ depth or greater in compilation cost. For instance, implementation of MaxCut and $k$-SAT using SWAP networks on circuits with 1-D connectivity requires depth $\Omega(n)$ and $\Omega(n^{k-1})$ respectively \cite{Crooks_2018, o2019generalized}. Such mappings with the aforementioned compilation overhead for $k\geq2$ are guaranteed to encounter NIBPs even for a fixed number of rounds $p$.

Second, it appears that $p$ values that grow at least lightly with $n$ may be needed for quantum advantage in certain optimization problems (for example, \cite{Bravyi2019ObstaclesTS, Wang2018fermionic, hastings2019classical,jiang2017near}). In addition, there are problems employing the QAOA that explicitly require $p$ scaling as $\poly(n)$ \cite{Akshay2020Reachability, Anschuetz2019Factoring}. Thus, without even considering the compilation overhead for the problem unitary, these QAOA problems may run into NIBPs particularly when aiming for quantum advantage. Moreover, weak growth of $p$ with $n$ combined with compilation overhead could still result in an NIBP.

Finally, we note that above we have assumed the contribution of $k_P$ dominates that of $k_M$. However, it is possible that for choice of more exotic mixers \cite{hadfield2019quantum}, $k_M$ also needs to be carefully considered to avoid NIBPs.

\begin{corollary}[Example: UCC] \label{cor:ucc}
Let $H$ denote a molecular Hamiltonian of a system of $M_e$ electrons. Consider the UCC ansatz as defined in Eq.~\eqref{eq:ucc}.  If local Pauli noise of the form in Eq.~\eqref{eq:noisemodel} with noise parameter $q$ acts before and after every $U_{l m}(\theta_{l m})$ in Eq.~\eqref{eq:ucc}, then we have
\begin{align}\label{eq:ucc-cor}
 |\partial_{\theta_{l m}} \widetilde{C}| \leq \sqrt{8\ln{2}}\,\widehat{N}_{l m} N_H \Vert \vec{\omega}\Vert_{\infty} \,n^{1/2} q^{cL+1}, 
\end{align}
for any coupled cluster amplitude $\theta_{l m}$,  and where $O=H$ in Eq.~\eqref{eq:cost}. 
\end{corollary}

Corollary \ref{cor:ucc} allows us to make general statements about the trainability of UCC ansatz. We present the details for the standard UCC ansatz with single and double excitations from occupied to virtual orbitals \cite{cao2019quantum,mcardle2020quantum} (see Methods for more details). Let $M_o$ denote the total number of spin orbitals. Then at least $n=M_o$ qubits are required to simulate such a system and the number of variational parameters grows as $\Omega(n^2 M_e^2)$ \cite{romero2018strategies,o2019generalized}. To implement the UCC ansatz on a quantum computer, the excitation operators are first mapped to Pauli operators using Jordan-Wigner or Bravyi-Kitaev mappings \cite{ortiz2001quantum,bravyi2002fermionic}. Then, using first-order Trotterization and employing SWAP networks \cite{o2019generalized}, the UCC ansatz can be implemented in $\Omega(n^2 M_e)$ depth, while assuming 1-D connectivity of qubits \cite{o2019generalized}. Hence for the UCC ansatz, approximated by single- and double-excitation operators, the upper bound in Eq.~\eqref{eq:ucc-cor} (asymptotically) vanishes exponentially in $n$. 

To target strongly correlated states for molecular Hamiltonians, one can employ a UCC ansatz that includes additional, generalized excitations \cite{nooijen2000can,wecker2015progress}. A $\Omega(n^3)$ depth circuit is required to implement the first-order Trotterized form of this ansatz \cite{o2019generalized}. Hence NIBPs become more prominent for generalized UCC ansatzes. Finally, we remark that a sparse version of the UCC ansatz can be implemented in $\Omega(n)$ depth \cite{o2019generalized}. NIBPs still would occur for such ansatzes.

Additionally, we can make the following remark about the Hamiltonian Variational Ansatz (HVA). As argued in \cite{wecker2015progress,ho2019efficient, cade2019strategies}, the HVA has the potential to be an effective ansatz for quantum many-body problems.
\begin{remark}[Example: HVA]
The HVA can be thought of as a generalization of the QAOA to more than two non-commuting Hamiltonians. It is remarked in Ref.~\cite{wiersema2020exploring} that for problems of interest the number of rounds $p$ scales linearly in $n$. Thus, considering this growth of $p$ and also the potential growth of the compiled unitaries with $n$, the HVA has the potential to encounter NIBPs, by the same arguments made above for the QAOA (e.g., Corollary~\ref{cor:qaoa}).
\end{remark}

\begin{remark}[Quantum Machine Learning] \label{remark:QML}
Our results can be extended to generalized cost functions of the form $C_{\textrm{train}}= \sum_i \Tr[O_i U(\thv)\rho_i U\ad(\thv)]$ where $\{O_i\}$ is a set of operators each of the form \eqref{eq:HamiltonianDef} and $\{\rho_i\}$ is a set of states. This can encapsulate certain quantum machine learning settings~\cite{schuld2014quest,schuld2015introduction,biamonte2017quantum,beer2020training,abbas2020power} that employ training data $\{ \rho_i\}$. As an example of an instance where NIBPs can occur, in one study \cite{abbas2020power} an ansatz model has been proposed that requires at least linear circuit depth in $n$.
\end{remark}


\subsection{Numerical simulations of the QAOA}

To illustrate the NIBP phenomenon beyond the conditions assumed in our analytical results, we numerically  implement the QAOA to solve MaxCut combinatorial optimization problems. We employ a realistic noise model obtained from gate-set tomography on the IBM Ourense superconducting qubit device. In the Methods we provide additional details on the noise model and the optimization method employed.

Let us first  recall that a MaxCut problem is specified by a graph $G=(V,E)$ of nodes $V$ and edges $E$. The goal is to partition the nodes of $G$ into two sets which maximize the number of edges connecting nodes between sets. Here, the QAOA problem Hamiltonian is given by
\begin{equation}
    H_P=-\frac{1}{2}\sum_{ij \in E} C_{ij}(\id-Z_i Z_j)\,,
\end{equation}
where $Z_i$ are local Pauli operators on qubit (node) $i$, $C_{ij}=1$ if the nodes are connected and $C_{ij}=0$ otherwise.

We analyze performance in two settings. First, we fix the problem size at $n=5$ nodes (qubits) and vary the number of rounds $p$ (Fig.~\ref{fig:heuristics}). Second, we fix the number of rounds of QAOA at $p=4$ and vary the problem size by increasing the number of nodes (Fig.~\ref{fig:heuristics2}).


In order to quantify performance for a given $n$ and~$p$, we randomly generate $100$ graphs according to the Erd\H os-R\'enyi model~\cite{erdds1959random}, such that each graph $G$ is chosen uniformly at random from the set of all graphs of $n$ nodes.
For each graph we run $10$ instances of the parameter optimization, and we select the run that achieves the smallest energy. At each optimization step the cost is estimated with $1000$ shots. Performance is quantified by the average approximation ratio when training the QAOA in the presence and absence of noise. The  approximation ratio is defined as the lowest energy obtained via optimizing divided by the exact ground state energy of $H_P$. 

\begin{figure}[t]
    \centering
    \includegraphics[width=\columnwidth]{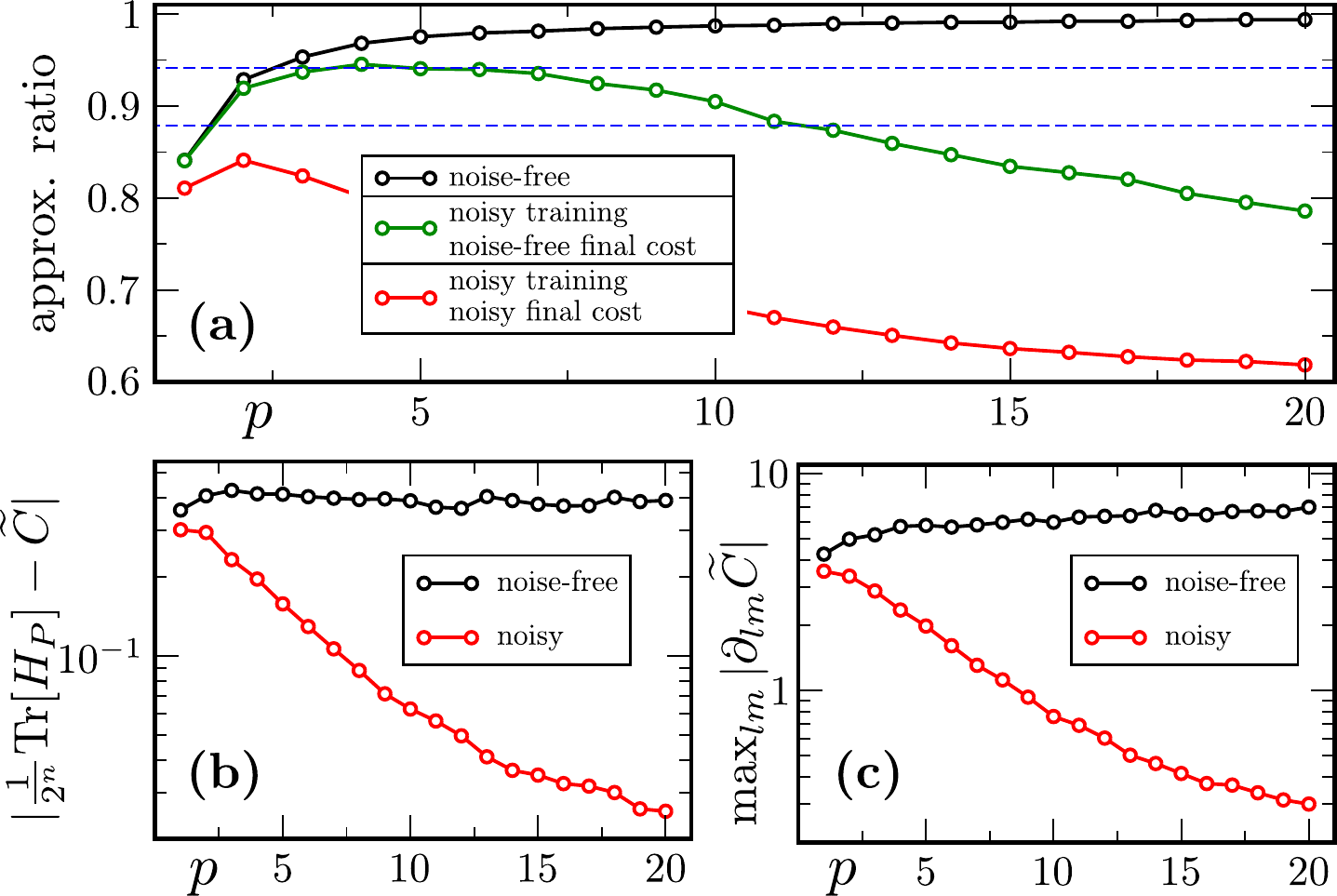} \caption{\textbf{QAOA heuristics in the presence of realistic hardware noise: increasing number of rounds for fixed problem size}. (a)~The approximation ratio averaged over 100 random graphs of 5 nodes is plotted versus number of rounds $p$. The black, green, and red curves respectively correspond to noise-free training, noisy training with noise-free final cost evaluation, and noisy  training with noisy final cost evaluation.  
    The performance of noise-free training increases with $p$, similar to the results in Ref.~\cite{Crooks_2018}. The green curve shows that the training process itself is hindered by noise, with the performance decreasing steadily with $p$ for $p>4$. The dotted blue lines correspond to known lower and upper bounds on classical performance in polynomial time: respectively the performance guarantee of the Goemans-Williamson algorithm \cite{goemans1995improved} and the boundary of known NP-hardness \cite{arora1998proof,haastad2001some}.
    (b) The deviation of the cost from $\Tr[H_P]/2^n$ (averaged over graphs and parameter values) is plotted versus $p$. As $p$ increases, this deviation decays approximately exponentially with $p$ (linear on the log scale). (c) The absolute value of the largest partial derivative, averaged over graphs and parameter values, is plotted versus $p$. The partial derivatives decay approximately exponentially with $p$, showing evidence of Noise-Induced Barren Plateaus (NIBPs).}
    \label{fig:heuristics}
\end{figure}

\begin{figure}[ht]
    \centering
    \includegraphics[width=\columnwidth]{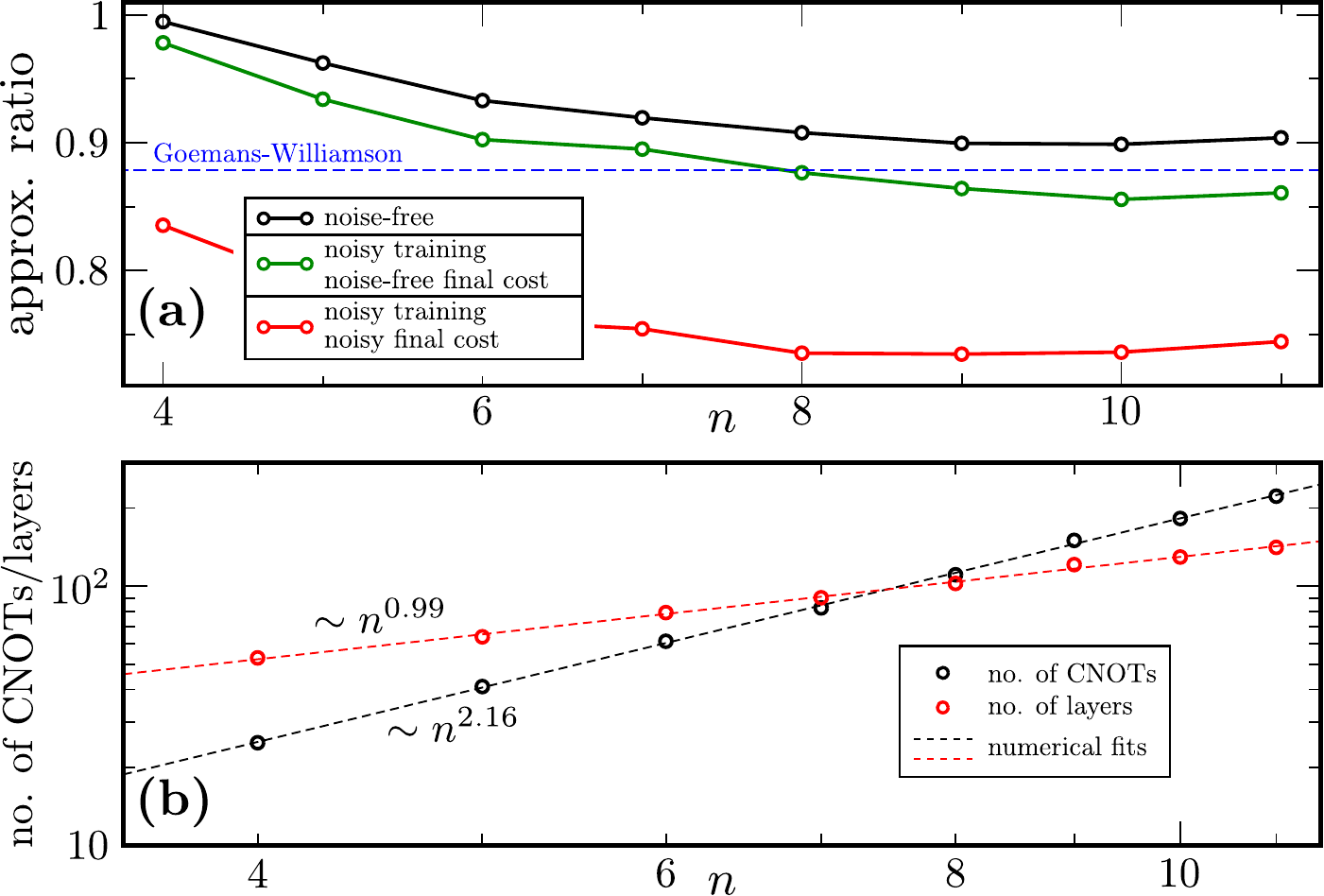} \caption{\textbf{QAOA heuristics in the presence of realistic hardware noise: increasing problem size for a fixed number of rounds}. The approximation ratio averaged over 60 random graphs of increasing number of nodes $n$ and fixed number of rounds $p=4$ is plotted. The black, green, and red curves respectively correspond to noise-free training, noisy training with noise-free final cost evaluation, and noisy training with noisy final cost evaluation. (a) For a problem size of 8 nodes or larger, the noisily-trained approximation ratio falls below the performance guarantee of the classical Goemans-Williamson algorithm. (b) The depth of the circuit (red curve) scales linearly with the number of qubits, confirming we are in a regime where we would expect to observe Noise-Induced Barren Plateaus.}
    \label{fig:heuristics2}
\end{figure}

In our first setting we observe in Fig.~\ref{fig:heuristics}(a) that when training in the absence of noise, the approximation ratio increases with $p$. However, when training in the presence of noise the performance decreases for $p>2$. This result is in accordance with Lemma~\ref{lemma1}, as the cost function value concentrates around $\Tr[H_P]/2^n$ as $p$ increases. This concentration phenomenon can also be seen clearly in Fig.~\ref{fig:heuristics}(b), where in fact we see evidence of exponential decay of cost value with $p$.  


In addition, we can see the effect of NIBPs as Fig.~\ref{fig:heuristics}(a) also depicts the value of the approximation ratio computed without noise by utilizing the parameters obtained via noisy training. Note that evaluating the cost in a noise-free setting has practical meaning, since the classicality of the Hamiltonian allows one to compute the cost on a (noise-free) classical computer, after training the parameters. For $p>4$ this approximation ratio decreases, meaning that as $p$ becomes larger it becomes increasingly hard to find a minimizing direction to navigate through the cost function landscape. Moreover, the effect of NIPBs is evident in  Fig.~\ref{fig:heuristics}(c) where we depict the average absolute value  of the largest cost function partial derivative (i.e.,  $\max_{l m}|\partial_{l m}\widetilde{C}|$). This plot shows an exponential decay of the partial derivative with $p$ in accordance with Theorem~\ref{thm1}. 

Finally, in Fig.~\ref{fig:heuristics}(a) we contextualize our results with previously known two-sided bounds on classical polynomial-time performance. The lower bound corresponds to the performance guarantee of the classical Goemans-Williamson algorithm \cite{goemans1995improved}, whilst the upper bound is at the value $16/17$ which is the approximation ratio beyond which Max-Cut is known to be NP-hard \cite{arora1998proof,haastad2001some}.  

In our second setting we find complementary results. In Fig.~\ref{fig:heuristics2}(a) we observe that at a problem size of $8$ qubits or larger, 4 rounds of QAOA trained on the noisy circuit falls short of the performance guarantees of the classical Goemans-Williamson algorithm. As we increase the number of qubits, we also observe this increases the depth of the circuit linearly (Fig.~\ref{fig:heuristics2}(b)), thus confirming we are in a regime of NIBPs.

Our numerical results show that training the QAOA in the presence of a realistic noise model significantly affects the performance. The concentration of cost and the NIBP phenomenon are both also clearly visible in our data. Even though we observe performance for $n=5$ and $p=4$ that is NP-hard to achieve classically, any possible advantage would be lost for large problem sizes or circuit depth due to bad scaling.
Hence, noise appears to be a crucial factor to account for when attempting to understand the performance of the QAOA.

\subsection{Implementation of the HVA on superconducting hardware}

We further demonstrate the NIBP phenomenon in a realistic hardware setting by implementing the Hamiltonian Variational Ansatz (HVA) on the IBM Quantum \textit{ibmq\_montreal} $27$-qubit superconducting device. At time of writing this holds the record for the largest quantum volume measured on an IBM Quantum device, which was demonstrated on a line of 6 qubits \cite{jurcevic2021demonstration}. 

We implement the HVA for the Transverse Field Ising Model as considered in Ref.~\cite{wiersema2020exploring}, with a local measurement $O=Z_0 Z_1$ on the first two qubits of the Ising chain. We assign the number of layers $L$ of the ansatz to increase linearly with the number of qubits $n$ according to the relationship $L=n-1$. In order to minimize SWAP gates used in transpilation (and the accompanying noise that they incur), we modify each layer of the HVA ansatz to only include entangling gates between locally connected qubits. 

Figure~\ref{fig:hardware} plots the partial derivative of the cost function with respect to the parameter in the final layer of the ansatz, averaged over 100 random parameter sets. We also plot averaged cost differences from the corresponding maximally mixed values, as well as the variance of both quantities.  In the noise-free case both the partial derivative and cost value differences decrease at a sub-exponential rate. Meanwhile, in the noisy case we observe that both the partial derivatives and cost value differences vanish exponentially until their variances reach the same order of magnitude as the shot noise floor. (As the shot budget on the IBM Quantum device is limited, this leads to a background of shot noise, and we plot the order of magnitude of this with a dotted line.) This explicitly demonstrates that the problem of barren plateaus is one of \emph{resolvability}. In principle, if one has access to exact cost values and gradients one may be able to navigate the cost landscape, however, the number of shots required to reach the necessary resolution increases exponentially with $n$.

\begin{figure}[t]
    \centering
    \includegraphics[width=\columnwidth]{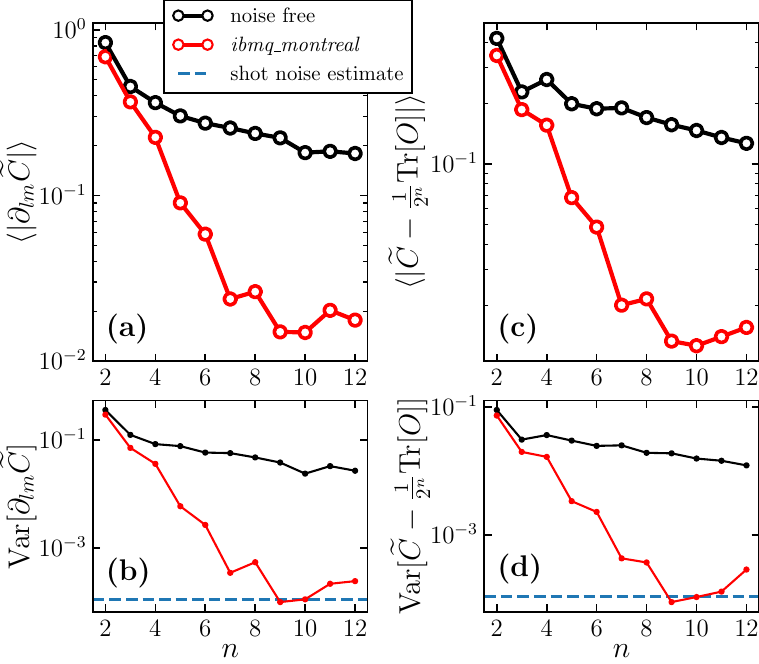} \caption{ \textbf{Implementation on the \emph{ibmq\_montreal} superconducting-qubit device}. We consider the HVA with the number of layers growing linearly in the number of qubits, $n$. a) The average magnitude of the partial derivative of the noisy and noise-free cost, with respect to the parameter in the final layer, is plotted versus $n$. The average is taken over 100 randomly selected parameter sets. As $n$ increases, the noisy average partial derivative decreases approximately exponentially, until around $n=9$. This shows evidence of Noise Induced Barren Plateaus on real quantum hardware. b) The deviation from exponential scaling can be understood by observing that it coincides with the point that the variance of the noisy partial derivatives reaches the same order of magnitude as the shot noise given by a finite sample budget of 8192 shots. Thus, from this point onward we expect fluctuations in the partial derivative to be dominated by shot noise, and gradients to be unresolvable.  c) The difference of the cost value from its corresponding maximally mixed value is plotted versus $n$. d) The variance of this difference is plotted versus $n$. Both these quantities also show exponential decay until the variance of cost difference approaches the shot noise floor, which shows evidence of exponential cost concentration on this device.}
    \label{fig:hardware}
\end{figure}

\section{Discussion}

The success of NISQ computing largely depends on the scalability of Variational Quantum Algorithms (VQAs), which are widely viewed as the best hope for near-term quantum advantage for various applications. Only a small handful of works have analytically studied VQA scalability, and there is even less known about the impact of noise on their scaling. Our work represents a breakthrough in understanding the effect of local noise on VQA scalability. We rigorously prove two important and closely related phenomena: the exponential concentration of the cost function in Lemma~\ref{lemma1} and the exponential vanishing of the gradient in Theorem~\ref{thm1}. We refer to the latter as a Noise-Induced Barren Plateau (NIBP). Like noise-free barren plateaus, NIBPs require the precision and hence the algorithmic complexity to scale exponentially with the problem size. Thus, avoiding NIBPs is necessary for a VQA to have any hope of exponential quantum speedup. 

NIBPs have conceptual differences from noise-free barren plateaus~\cite{mcclean2018barren,sharma2020trainability,cerezo2020cost,marrero2020entanglement,patti2020entanglement,holmes2021connecting} as the gradient vanishes with increasing problem size at every point on the cost function landscape, rather than probabilistically. As a consequence, NIBPs cannot be addressed by layer-wise training, correlating parameters and other strategies~\cite{cerezo2020cost,uvarov2020barren,volkoff2021large,verdon2019learning,grant2019initialization,skolik2020layerwise}, all of which can help avoid noise-free barren plateaus. We explicitly demonstrate this in Remark~\ref{remark1} for the parameter correlation strategy. Similar to noise-free barren plateaus, NIBPs present a problem for trainability even when utilizing gradient-free optimizers \cite{arrasmith2020effect} (e.g.~simplex-based methods such as \cite{nelder1965simplex} or methods designed specifically for quantum landscapes  \cite{koczor2020quantum})  or optimization strategies  that use higher-order derivatives \cite{cerezo2021higher}. At the moment, the only strategies we are aware of for avoiding NIBPs are: (1) reducing the hardware noise level, or (2) improving the design of variational ansatzes such that their circuit depth scales more weakly with $n$. Our work provides quantitative guidance for how to develop these strategies.

We emphasize that na\"ive mitigation strategies such as artificially increasing gradients cannot remove the exponential scaling of NIBPs as this simply increases the variance of any finite-shot evaluation of derivatives, and it does not improve the resolvability of the landscape. This argument extends simply to include any error mitigation strategy that implements an affine map to cost values \cite{czarnik2020error,montanaro2021error,vovrosh2021efficient, rosenberg2021experimental, he2020zero, shaw2021classical, arute2020observation}. Further, most error mitigation techniques consist only of postprocessing noisy circuits. Thus, we deem it unlikely many strategies can remove exponential NIBP scaling as information about the cost landscape has fundamentally been lost (or at least been made exponentially inaccessible). This is in contrast to error correction where information is protected and recovered. However, in general it is an open question as to whether or not error mitigation strategies can mitigate NIBPs, and we leave this question for future work.

An elegant feature of our work is its generality, as our results apply to a wide range of VQAs and ansatzes. This includes the two most popular ansatzes, QAOA for optimization and UCC for chemistry, which Corollaries~\ref{cor:qaoa} and~\ref{cor:ucc} treat respectively. In recent times QAQA, UCC, and other physically motivated ansatzes have be touted as the potential solution to trainability issues due to (noise-free) barren plateaus, while Hardware Efficient ansatzes, which minimize circuit depth, have been regarded as problematic. Our work swings the pendulum in the other direction: any additional circuit depth that an ansatz incorporates (regardless of whether it is physically motivated) will hurt trainability and potentially lead to a NIBP. This suggests that Hardware Efficient ansatzes are in fact worth exploring further, provided one has an appropriate strategy to avoid noise-free barren plateaus. This claim is supported by recent state-of-the-art implementations for optimization~\cite{arute2020quantum} and chemistry~\cite{arute2020hartree} using such ansatzes. Our work also provides additional motivation towards the pursuit of adaptive ansatzes \cite{bilkis2021semi, grimsley2019adaptive, tang2021qubit, zhang2021mutual, rattew2019domain, chivilikhin2020mog, cincio2021machine, cincio2018learning, du2020quantum} that reduce circuit depth.

We believe our work has particular relevance to optimization. For combinatorial optimization problems, such as MaxCut on 3-regular graphs, the compilation of a single instance of the problem unitary $e^{-i \gamma H_P}$ can require an $\Omega(n)$-depth circuit~\cite{arute2020quantum}. Therefore, for a constant number of rounds $p$ of the QAOA, the circuit depth grows at least linearly with $n$. From Theorem \ref{thm1}, it follows that NIBPs can occur for practical QAOA problems, even for constant number of rounds.  Furthermore, even neglecting the aforementioned linear compilation overhead, NIBPs are guaranteed (asymptotically) if $p$ grows in $n$. Such growth has been shown to be necessary in certain instances of MaxCut~\cite{Bravyi2019ObstaclesTS} as well as for other optimization problems~\cite{Akshay2020Reachability, Anschuetz2019Factoring}, and hence NIBPs are especially relevant in these cases.

While it is well known that decoherence ultimately limits the depth of quantum circuits in the NISQ era, there was an interesting open question (prior to our work) as to whether one could still train the parameters of a variational ansatz in the high decoherence limit. This question was especially important for VQAs for optimization, compiling, and linear systems, which are applications that do not require accurate estimation of cost functions on the quantum computer. Our work essentially provides a negative answer to this question. Naturally, important future work will involve extending our results to more general (e.g., non-unital) noise models, and numerically testing the tightness of our bounds. Moreover, our work emphasizes the importance of short-depth variational ansatzes. Hence a crucial research direction for the success of VQAs will be the development of methods to reduce ansatz depth.

\section{Methods}

\subsection{Special cases of our ansatz}

Here we discuss how the the QAOA, the Hardware Efficient ansatz, and the UCC ansatz fit into the framework as described in the general framework subsection.


1.~Quantum Alternating Operator Ansatz. The QAOA can be understood as a discretized adiabatic transformation where the goal is to prepare the ground state of a given Hamiltonian $H_P$. The order $p$ of the Trotterization determines the solution precision and the circuit depth. Given an initial state $\ket{\vec{s}}$, usually the linear superposition of all elements of the computational basis $\ket{\vec{s}}=\ket{+}^{\otimes n}$, the ansatz corresponds to the sequential application of  two unitaries $U_P(\gamma_l)=e^{-i \gamma_l H_P}$ and $U_M(\beta_l)=e^{-i \beta_l H_M}$. These alternating unitaries are  usually known as the problem   and mixer unitary, respectively. Here $\vec{\gamma}=\{\gamma_k\}_{l=1}^L$ and $\vec{\beta}=\{\beta_k\}_{l=1}^L$ are vectors of variational parameters which determine how long each unitary is applied and which must be optimized to minimize the cost function $C$,  defined as the  expectation value
 \begin{equation}
 C=\matl{\vec{\gamma},\vec{\beta}}{H_P}{\vec{\gamma},\vec{\beta}}=\Tr[H_P\dya{\vec{\gamma},\vec{\beta}}]\,,
 \end{equation}
 where $\ket{\vec{\gamma},\vec{\beta}}=U(\vec{\gamma},\vec{\beta})\ket{\vec{s}}$ is the QAOA variational state, and where $U(\vec{\gamma},\vec{\beta})$ is given by~\eqref{eq:QAOA}. In Fig.~\ref{fig:ansatz}(a) we depict the circuit description of a QAOA ansatz for a specific Hamiltonian where $k_P=6$.

\begin{figure}[t]
    \centering
    \includegraphics[width=1\columnwidth]{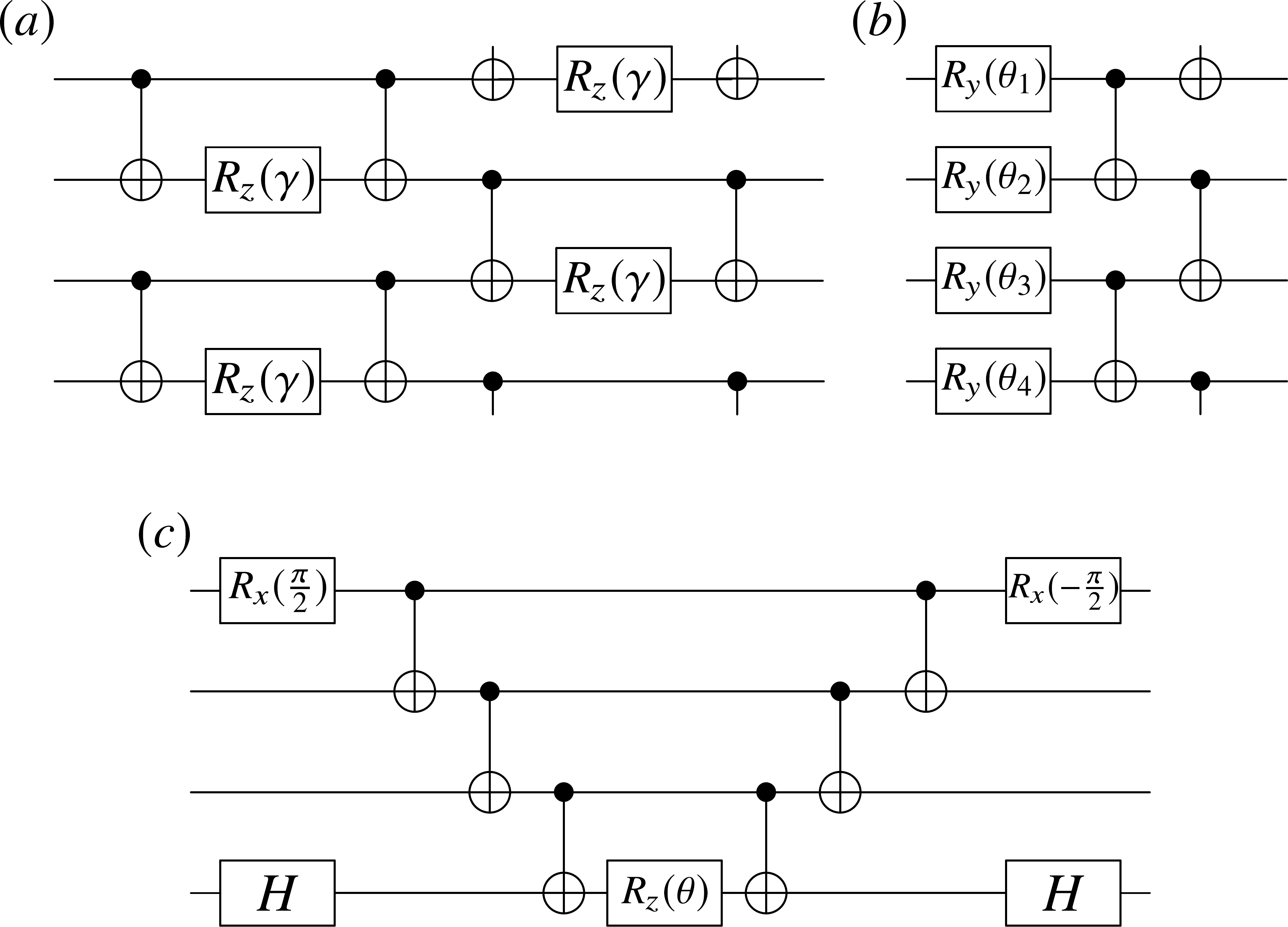}
    \caption{\textbf{Special cases of our general  ansatz.} (a) QAOA problem  unitary $e^{-i \gamma H_P}$ for the ring-of-disagrees MaxCut problem, with  Hamiltonian $H_P=\frac{1}{2}\sum_j Z_j Z_{j+1}$. (b) Hardware Efficient ansatz composed of CNOTs and single qubit rotations around the $y$-axis $R_y(\theta)$. (c) Unitary for the exponential $e^{-i  \theta  Y_{1}Z_{2}Z_{3}X_{4}}$. This type of circuit is a representative component of the UCC ansatz.     }
    \label{fig:ansatz}
\end{figure}

2.~Hardware Efficient Ansatz. The goal of the Hardware Efficient ansatz is to reduce the gate overhead (and hence the circuit depth) which arises when implementing a general unitary as in~\eqref{eq:layerunitary}. Hence, when employing a specific quantum hardware the parametrized gates $e^{-i \theta_{l m}H_{l m}}$ and the unparametrized gates $W_{l m}$ are taken from a gate alphabet composed of native gates to that hardware. Figure~\ref{fig:ansatz}(b) shows an example of a  Hardware Efficient ansatz where the gate alphabet is composed of rotations around the $y$ axis and of CNOTs.

3.~Unitary Coupled Cluster Ansatz. This ansatz is employed to estimate the ground state energy of the molecular Hamiltonian. In the second quantization, and within the Born-Oppenheimer approximation, the molecular Hamiltonian of a system of $M_e$ electrons can be expressed as:
$H = \sum_{pq} h_{pq} a\ad_p a_q + \frac{1}{2} \sum_{pqrs} h_{pqrs} a\ad_p a\ad_q a_r a_s$, where $\{a_p\ad\}$  ($\{a_q\}$) are Fermionic creation (annihilation) operators. Here, $h_{pq}$ and $h_{pqrs}$ respectively correspond to the so-called one- and two-electron integrals~\cite{cao2019quantum,mcardle2020quantum}. The ground state energy of $H$ can be  estimated with the VQE algorithm by preparing a reference state, normally taken to be the Hartree-Fock (HF) mean-field state $\ket{\psi_0}$, and acting on it with a parametrized UCC ansatz. 

The  action of a UCC ansatz with single ($T_1$) and double ($T_2$) excitations is given by $\ket{\psi} = \exp(T-T\ad)\ket{\psi_0}$, where $T=T_1+T_2$, and where
\begin{align}
T_{1}= \sum_{\substack{i \in \mathrm{occ}\\
a \in \mathrm{vir}}} t_{i}^{a} a_{a}^{\dagger} a_{i}, \quad 
T_{2}= \sum_{\substack{i, j \in \mathrm{occ}\\
a,b\in \mathrm{vir}}} t_{i, j}^{a,b} a_{a}^{\dagger} a_{b}^{\dagger} a_{j} a_{i}\,.
\end{align}
Here the $i$ and $j$ indices range over ``occupied'' orbitals whereas the $a$ and $b$ indices range over ``virtual'' orbitals \cite{cao2019quantum,mcardle2020quantum}. The coefficients $t^a_i$ and $t^{a, b}_{i,j}$ are called coupled cluster amplitudes. For simplicity, we denote these amplitudes $\{t^a_i, t^{a, b}_{i,j}\}$ as
$\{\theta_{l m}\}$. Similarly, by denoting the excitation operators \{$a_{a}^{\dagger} a_{i}$, $a_{a}^{\dagger} a_{b}^{\dagger} a_{j} a_{i}\}$ as $\{\tau_{l m}\}$, the UCC ansatz can be written in a compact form as $U(\vec{\theta}) = e^{\sum_{l m} \theta_{l m}(\tau_{l m}-\tau\ad_{l m})}$. In order to implement $U(\vec{\theta})$   one maps the fermionic operators to spin operators by means of the Jordan-Wigner or the Bravyi-Kitaev transformations \cite{ortiz2001quantum,bravyi2002fermionic}, which allows us to write $(\tau_{l m} - \tau_{l m}\ad) = i \sum_i \mu^i_{l m}  \sigma^i_n$. Then,  from a first-order Trotterization we obtain~\eqref{eq:ucc}. Here, $\mu^i_{l m} \in \{0,\pm1\}$. In Fig.~\ref{fig:ansatz}(c) we depict the circuit description of a representative component of the UCC ansatz.

\subsection{Proof of Theorem~\ref{thm1}}

Here we outline the proof for our main result on Noise-Induced Barren Plateaus. We refer the reader to the Supplementary Note \hyperref[sec:thm1proof-supp]{2} for additional details. We note that Lemma \ref{lemma1} and Remark \ref{remark1} follow from similar steps and their proofs are detailed in Supplementary Notes~\hyperref[sec:prooflemma]{3} and~\hyperref[sec:proofremark]{4} respectively. Moreover, we remark that Corollaries \ref{cor1}, \ref{cor:qaoa} and \ref{cor:ucc} follow in a straightforward manner from a direct application of Theorem~\ref{thm1} and Remark~\ref{remark1}.

Throughout our calculations we find it useful to use the expansion of operators in the Pauli tensor product basis. Given an $n$-qubit Hermitian operator $\Lambda$, one can always consider the decomposition
\begin{equation}\label{eq:paulibreakdown}
    \Lambda = \lambda_0 {\id}^{\otimes n} + \vec{\lambda}\cdot\vec{\sigma}_n\,, 
\end{equation}
where $\lambda_0 \in \mathbb{R}$ and $\vec{\lambda} \in \mathbb{R}^{4^n-1}$. Note that here we redefine the vector of Pauli strings $\vec{\sigma}_n$ as a vector of length  $4^n-1$ which excludes ${\id}^{\otimes n}$. 

Central to our proof is to understand how operators are mapped by concatenations of unitary transformations and noise channels. We do this through two lenses. First, given an operator $\Lambda$ we investigate how various $\ell_p$-norms of $\vec{\lambda}$ are related at different points in the evolution. Such quantities are well suited to study in our setting as we can use the transfer matrix formalism in the Pauli basis, that is, to represent a channel $\mathcal{N}$ with the matrix $(T_\mathcal{N})_{ij} = \frac{1}{2^n}\Tr\big[\sigma^i_n\, \mathcal{N}(\sigma^j_n)\big]$. Indeed, we see that the noise model in \eqref{eq:noisemodel} has a diagonal Pauli transfer matrix, which motivates this choice of attack. The second quantity we use is the sandwiched 2-R\'enyi relative entropy $D_2\big(\rho\big\Vert \id^{\otimes n} /2^n \big)$ between a state $\rho$ and the maximally mixed state. This is also useful to study due to the strong data processing inequality in Ref.~\cite{hirche2020contraction} which quantifies how noise maps $\rho$ closer to the maximally mixed state. 

Let us now present some lemmas that reflect these two perspectives. The action of the noise in \eqref{eq:noisemodel} on the operator $\Lambda$ is to map the elements of $\vec{\lambda}$ as $\lambda_i \xrightarrow[]{\,\mathcal{N}\,} \widetilde{\lambda}_i= q_X^{x(i)}q_Y^{y(i)}q_Z^{z(i)} \lambda_i$ where  $x(i)$, $y(i)$, and $z(i)$ respectively denote the number of $X,Y$, and $Z$ operators in the $i$-th Pauli string. Recall the definition $q=\sqrt{\max\{|q_X|,|q_Y|,|q_Z|\}}$. Since $x(i)+y(i)+z(i)\geq 1$,   
the inequality  $| \widetilde{\lambda}_i | \leq q^2 | {\lambda_i}| $ always holds for all $i$. Second, the action of a unitary is to preserve the size of the coefficients  measured in Euclidean norm $\|\vec{\lambda}\|_2$. This can be seen from the correspondence between $\|\vec{\lambda}\cdot \vec{\sigma}_n\|_2$ and $\|\vec{\lambda}\|_2$ and the unitary invariance of Schatten norms. Together, this already provides intuition that the effect of successive layers of unitaries and noise is to shrink the length of the Pauli coefficient vector. The second lemma we present is a consequence of a strong data-processing inequality of of the sandwiched 2-R\'enyi relative entropy of Ref.~\cite{hirche2020contraction}, from which we can show
\begin{equation}\label{eq:thmlemma-entropy}
    D_2\big(\WC^k(\rho)\big\Vert {\id^{\otimes n}} /2^n \big) \leq q^{2ck} D_2\big(\rho\big\Vert \id^{\otimes n} /2^n \big)\,,
\end{equation}
where $\WC^k$ denotes the concatenated channel of $k$ layers of unitary channels and noise channels $\NC$, $c=1/(2\ln 2)$ is a constant, and we note that $D_2\big(\rho\big\Vert \id^{\otimes n} /2^n \big)$ itself is always upper bounded by $n$ for any $n$-qubit quantum state $\rho$.

With this we have the main tools we present a sketch of the proof of a variant of Theorem \ref{thm1}. In order to analyze the partial derivative of the cost function
$ \partial_{lm}\widetilde{C}=\Tr\left[ O\,\partial_{l m}\,\rho_L  \right]$ 
we first note that the output state $\rho_L$ can be expressed as
\begin{align}
   \rho_L &= \left(\NC \circ \mathcal{W}_+ \circ  \mathcal{W}_- \right)( \rho_{0} )= \NC \circ\mathcal{W}_+ ( \rho_{l-} ) \,,
\end{align}
where $\rho_0$ is the input state, we denote $\rho_{l-}=\mathcal{W}_-(\rho_0)$, and 
\begin{align}
    \mathcal{W}_+ &= \mathcal{U}_L\circ \cdots \circ \mathcal{U}_{l+1} \circ \mathcal{N} \circ \mathcal{U}^{+}_{lm}\,, \\
    \mathcal{W}_- &= \mathcal{U}^{-}_{lm} \circ \mathcal{N} \circ \mathcal{U}_{l-1} \circ \cdots \circ \mathcal{N} \circ \mathcal{U}_1 \circ \mathcal{N} \,,
\end{align}
where $\mathcal{U}^{\pm}_{lm}$ are channels that implement the unitaries  ${U}^{-}_{lm} = \prod_{s\leq m} e^{-i \theta_{ls} H_{ls}} $ and ${U}^{+}_{lm} = \prod_{s>m} e^{-i \theta_{ls} H_{ls}} $ such that $U_{l}={U}^{+}_{lm}\cdot{U}^{-}_{lm}$. For simplicity of notation here we have omitted the parameter dependence on the concatenation of channels. It is straightforward to show that
\begin{equation}\label{eq:thm1lem-commutator}
    \partial_{lm} {\rho}_{l-}  = - i [H_{lm}, {\rho}_{l-}]\,.
\end{equation}

Using the tracial matrix H{\"o}lder's inequality \cite{baumgartner2011inequality}, we can write
\begin{align}
        \big|\partial_{l m} \widetilde{C}\big| &= \big|\mathrm{Tr}\left[O\,  \NC \circ \WC_+ (\partial_{l m}\,\rho_{l-}) \right]\big| \label{eq:gradient-methodology} \\
        &\leq \left\|O\right\|_\infty\, \left\| \NC \circ \WC_+ (\partial_{l m}\,\rho_{l-}) \right\|_1\,.
\end{align}
We can also consider the action of the adjoint map of $\NC$ on $O$, but for now proceed with a simpler proof which leads to a similar bound. We bound the second term by using \eqref{eq:thm1lem-commutator}, a lemma  on the correspondence between commutators and linear combinations of maps of \cite{li2017hybrid}, quantum Pinsker's inequality \cite{ohya2004quantum}, and \eqref{eq:thmlemma-entropy} to obtain $\left\| \NC \circ \WC_+ (\partial_{l m}\,\rho_{l-}) \right\|_1 \leq \sqrt{8\ln{2}}\, \|\vec{\eta}_{lm}\|_\infty\, n^{1/2} q^{c(L+1)}$. This gives the bound
\begin{equation}
    \big|\partial_{l m} \widetilde{C}\big| \leq \sqrt{8\ln{2}}\, \|O\|_{\infty}  \big\|\vec{\eta}_{lm}\big\|_\infty n^{1/2} q^{c(L+1)}\,, 
\end{equation}
which is essentially of the form of the bound in Theorem \ref{thm1}. We defer the reader to the Supplementary Information for the full proof. 

\subsection{Proof of Proposition \ref{prop1}}

Here we sketch the proof of Proposition \ref{prop1}, with additional details being presented in Supplementary Note \hyperref[sec:prop1si]{8}. 

We model measurement noise as a tensor product of independent local classical bit-flip channels, which mathematically corresponds to modifying the local POVM elements $P_0 = \dya{0}$ and $P_1 = \dya{1}$ as follows: 
\begin{align}
    P_0 = \dya{0} &\rightarrow \widetilde{P}_0 = \frac{1+q_M}{2} \dya{0} + \frac{1-q_M}{2} \dya{1} \,, \\
    P_1 = \dya{1} &\rightarrow \widetilde{P}_1 = \frac{1-q_M}{2} \dya{0} + \frac{1+q_M}{2} \dya{1} \,.
\end{align}
In turn, it follows that one can also model this measurement noise as a tensor product of local depolarizing channels with depolarizing probability $1 \geq (1-q_M)/2 \geq 0$, which we indicate by $\mathcal{N}_M$. The channel is applied directly to the measurement operator such that $\mathcal{N}_M(O)=\sum_i \omega^i\mathcal{N}_M(\sigma_n^i)=\widetilde{\vec{\omega}}\cdot \vec{\sigma}_n$. Here $\widetilde{\vec{\omega}}$ is a vector of coefficients $\widetilde{\omega}^i=q_M ^{w(i)}\omega^i$, where $w(i)=x(i) + y(i) + z(i)$ is the weight of the Pauli string. We recall that we have respectively defined $x(i)$, $y(i)$, $z(i)$ as the number of Pauli operators $X$, $Y$, and $Z$ in the $i$-th Pauli string.


Let us return to the partial derivative of the cost in \eqref{eq:gradient-methodology}. In the presence of measurement noise we then have 
\begin{align}
        \big|\partial_{l m} \widetilde{C}\big| &\leq \left\|\NC_M(O)\right\|_\infty\, \left\| \NC \circ \WC_+ (\partial_{l m}\,\rho_{l-}) \right\|_1 \,.
\end{align}
The term $\left\|\NC_M(O)\right\|_\infty$ can be bounded by the sum of noisy Pauli coefficients, each of which are dampened by factor $q_M^{w(i)}\leq q_M^{w}$. This gives an extra locality-dependent factor in the bound on the partial derivative:
    \begin{equation}
        |\partial_{l m} \widetilde{C}| \leq q_M^{w} F(n) \,.
    \end{equation}
    
An analogous reasoning leads to the following result for the concentration of the cost function:
    \begin{equation}\label{eq:prop-b12}
        \left\vert\widetilde{C} - \frac{1}{2^n}\Tr\,O\right\vert \leq q_M^{w} G(n) \,.
    \end{equation}

\subsection{Details of numerical implementations}

The noise model employed in our numerical simulations was  obtained by performing one- and two-qubit gate-set tomography~\cite{blume2017demonstration,Nielsen_2020} on the five-qubit IBM Q Ourense superconducting qubit device. The process matrices for each gate native to the device's  alphabet, and the state preparation and measurement noise are described in Ref.~\cite[Appendix B]{cincio2021machine}. In addition, the optimization for the MaxCut problems was performed using an optimizer based on the Nelder-Mead simplex method.

\subsection{Data availability}

Data generated and analyzed during the current study are available from the corresponding author upon reasonable request.

\subsection{Code availability}

Further implementation details are available from the authors upon request.

\section{Acknowledgements}

We thank Daniel Stilck Fran\c{c}a for helpful discussions and for pointing us to Ref.~\cite{muller2016relative}. Research presented in this article was supported by the Laboratory Directed Research and Development program of Los Alamos National Laboratory under project number 20190065DR. SW and EF acknowledge support from the U.S. Department of Energy (DOE) through a quantum computing program sponsored by the LANL Information Science \& Technology Institute. MC and AS were also supported by the Center for Nonlinear Studies at LANL. PJC also acknowledges support from the LANL ASC Beyond Moore's Law project. LC and PJC were also supported by the U.S. Department of Energy (DOE), Office of Science, Office of Advanced Scientific Computing Research, under the Quantum Computing Applications Team (QCAT) program. 

\section{Author contributions}
The project was conceived by PJC, MC, KS, and LC. Lemma 1 and Theorem 1 were proven by SW. Proposition 1 was proven by EF and MC. Corollaries 2 and 3 were proven by KS and SW. Numerical heuristics were run by LC. Implementation on quantum hardware was run by SW. The manuscript was written by SW, EF, KS, MC, AS, LC, and PJC. 

\section{Competing interests}
The authors declare no competing interests.

\bibliography{Ref.bib}

\onecolumngrid

\pagebreak

\appendix

\vspace{0.5in}

\setcounter{theorem}{0}

\begin{center}
	{\Large \bf Supplementary Information for \sl Noise-Induced Barren Plateaus in Variational Quantum Algorithms} 
\end{center}

 
In this Supplementary Information we provide proofs for the main results of the manuscript ``Noise-induced barren plateaus in variational quantum algorithms''. In Supplementary Note \hyperref[sec:pre-supp]{1} we first present some definitions and lemmas which will be useful in deriving our results. We point readers to \cite{nielsen2010, wilde2017, khatri2020principles} for additional background on relevant tools in quantum information theory. Then, in Supplementary Note \hyperref[sec:thm1proof-supp]{2} we present a detailed proof of our main result Theorem~\ref{thm1}. Supplementary Notes~\hyperref[sec:prooflemma]{3} and~\hyperref[sec:proofremark]{4} respectively contain the proofs for Lemma~\ref{lemma1} on cost concentration, and Remark \ref{rem:degparamsupp} on a generalization to correlated (degenerate) parameters. We present our proof for Remark \ref{remark:extension_noisemodel} on extensions to the noise model to $k$-local noise in Supplementary Note \hyperref[sec:remark2-supp]{5} and our proofs of Corollaries~\ref{cor:qaoa} and~\ref{cor:ucc} on application-specific results in Supplementary Note~\hyperref[sec:cor2sup]{6}. In Supplementary Note \hyperref[sec:remark:QML-supp]{7} we discuss our Remark \ref{remark:QML} on a construction where the cost function is summed over some dataset. Finally, the proof for Proposition~\ref{prop1} on measurement noise is detailed in Supplementary Note \hyperref[sec:prop1si]{8}.

\section*{Supplementary Note 1 - Preliminaries}\label{sec:pre-supp}

\subsection{Definitions}

\textbf{Quantum states.} Given some choice of Hilbert space $\mathcal{H}$, we denote the set of density operators as $\mathcal{S}(\mathcal{H})$. \bigskip

\textbf{Pauli expansion.} We note that one can always expand  $H_{l m}$ and $O$ in the Pauli basis as
\begin{align}\label{eq:HamiltonianDefSupp}
    H_{l m} &= \sum_{i} \eta^{i}_{l m}\sigma_n^{i} = c_{l m}^0\sigma_n^0 + \vec{\eta}_{l m} \cdot \vec{\sigma}_n\,,\\
    O &=\sum_{i}\omega^{i}\sigma_n^{i} =\omega^0\sigma_n^0 + \vec{\omega} \cdot \vec{\sigma}_n \label{eq:OSupp}\,.
\end{align}
where now  $\sigma_n^{i}\in \{\id,X,Y,Z\}^{\otimes n} \text{\textbackslash} \{\id^{\otimes n} \}$    length-$n$ Pauli strings. Here we remark that for the sake of simplicity we have made a subtle change in notation as now $\sigma_n^0 = \id^{\otimes n}$ is treated on a separate footing. With this notation,  $\vec{\sigma}_n, \vec{\eta}_{l m}, \vec{\omega}$ are real vectors of length $4^{n}-1$ and run over indices $i\in [4^{n}-1]$. Moreover, we recall that we have defined $N_{l m}=\vert \vec{\eta}_{l m} \vert$, and $N_O=\vert \vec{\omega} \vert$ as the number of non-zero elements in each respective vector. Furthermore, note that we can always set $\omega^0=0$ and $\eta^0_{l m} = 0$ for all $l m$. This does not lose us generality in our setting as a non-zero $\omega^0$ corresponds to a trivial measurement, while a  non-zero $\eta^0_{l m}$ simply leads to a different choice in the Hamiltonian normalization.  
\bigskip

\textbf{Vector norms.} In what follows we use the usual definitions of the $p$-norms such that $\|\vec{a}\|_\infty \equiv \max_i |a_i|$ is the largest element of vector $\vec{a}$ and $\|\vec{a}\|_2 \equiv \sqrt{\sum_i |a_i|^2}$ is the Euclidean norm.

\bigskip

\textbf{Setting for our analysis.} As shown in  Fig.~\ref{fig:suppcircuit} we break down the circuit into $L$ unitaries preceded and followed by noisy channels acting on all qubits. Let $\rho_0$ and $\rho_l$ respectively denote the input state and the state obtained after the $l$-th unitary. Let $\mathcal{N}=\mathcal{N}_1\otimes\cdots\otimes\mathcal{N}_n$ denote the $n$-qubit noise channel. Then the noisy cost function $\widetilde{C}$, defined as the expectation value of an operator $O$, can be represented as follows: 
\begin{align}\label{eq:noisycostsupp}
    \widetilde{C} &= \mathrm{Tr}\Big[ O\; \big(\mathcal{N}\circ \mathcal{U}_L(\vec{\theta}_L) \circ \mathcal{N} \circ \cdots \circ \mathcal{U}_2(\vec{\theta}_2) \circ \mathcal{N} \circ \mathcal{U}_1(\vec{\theta}_1) \circ \mathcal{N} \big) (\rho_0) \Big]\,,
\end{align}
where the $l$-th unitary channel $\mathcal{U}_l(\vec{\theta}_l)$ implements the unitary operator
\begin{equation} \label{eq:ansatzSupp}
    U_l(\thv_l)= \prod_{m} e^{-i \theta_{l m} H_{l m}} W_{l m}\,.
\end{equation}
Here we recall that $\thv_l=\{\theta_{l m}\}$ are continuous parameters and $W_{l m}$ denote unparameterized gates.

\bigskip

\textbf{Noise model.}
We consider a noise model where local Pauli noise channels $\mathcal{N}_j$ act on each qubit $j$ before and after each unitary $U_l(\thv_l)$. The action of $\mathcal{N}_j$ on a local Pauli operator $\sigma\in\{X,Y,Z\}$  can be expressed as
\begin{equation}\label{eq:noisemodelSupp}
    \mathcal{N}_j(\sigma)=q_{\sigma}\sigma\,,
\end{equation}
where $-1< q_X,q_Y,q_Z<1$. Here, we characterize the noise strength with a single parameter
\begin{equation} \label{eq:qmax}
    q=\max\{|q_X|,|q_Y|,|q_Z|\}\,.
\end{equation}

\bigskip

\textbf{Representation of the quantum state.} Here we will use the Pauli representation of an $n$ qubit state
\begin{align}
    \rho &= \frac{1}{2^n}\Big( \id^{\otimes n} + \vec{a}\cdot \vec{\sigma}_n \Big) \,,
\end{align}
where $a_i = \langle \sigma_n^i \rangle = \mathrm{Tr}[\rho\, \sigma_n^i]$ for all $i \in [4^{n}-1]$. The state $\rho$ can then be represented by a vector $\vec{a}$ of length $(4^{n}-1)$, with elements $a_i$, which we will refer to as the Pauli coefficients.

Recalling that $\rho_l$ is the state obtained after the application of the $l$-th unitary, we  employ the notation $a^{(l)}_i$ for its Pauli coefficients. That is, we explicit write the output of layer $l$ as 
\begin{align}
    \rho_l &= \frac{1}{2^n}( \id^{\otimes n} + \sum^{4^{n}-1}_{i=1} a^{(l)}_i \, \sigma_n^i ) = \frac{1}{2^n}( \id^{\otimes n} +  \vec{a}^{(l)} {\cdot}\, \vec{\sigma}_n ). \label{eq:paulibreakdown-supp}
\end{align}

\begin{figure}
    \centering
    \includegraphics[totalheight=0.25\textheight,trim={4cm 8.5cm 4cm 7cm},clip]{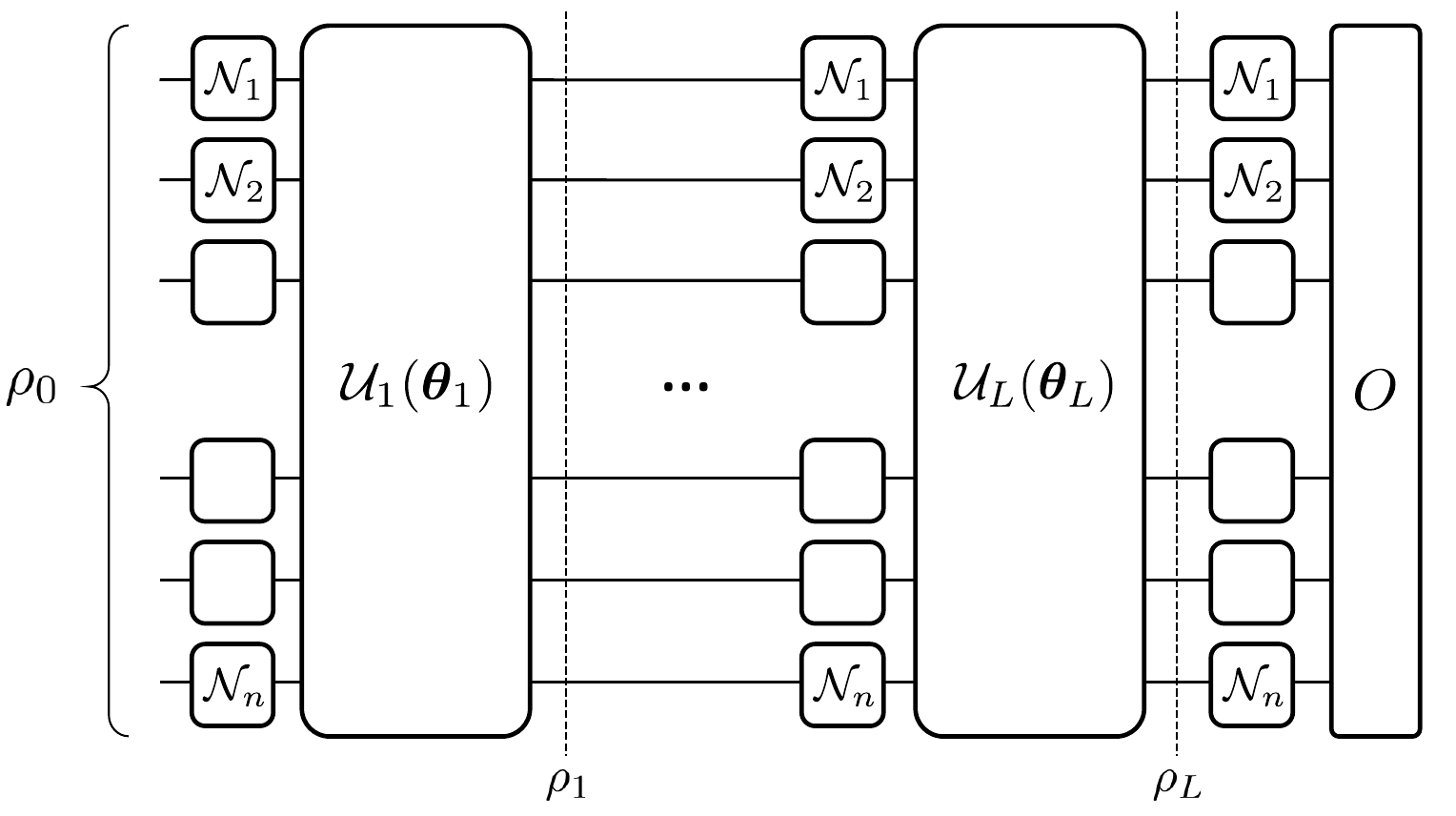} \caption{\textbf{Setting for our analysis.} An $n$-qubit input state $\rho_0$ is sent through a variational ansatz $U(\thv)$ composed of $L$ unitary layers $U_l(\thv_l)$ sequentially acting according to Eq.~\eqref{eq:ansatzSupp}. Here,  $\mathcal{U}_l$ denotes the quantum channel that implements the unitary $U_l(\thv_l)$.   The parameters in the ansatz $\thv=\{\thv_l\}_{l=1}^L$ are trained to minimize a cost function that is expressed as the expectation value of an operator $O$ as in Eq.~\eqref{eq:noisycostsupp}. We consider a noise model where local Pauli noise channels $\mathcal{N}_j$ act on each qubit $j$ before and after each unitary. We denote the state obtained after $l$ applications of noise followed by unitary as $\rho_l$. } \label{fig:suppcircuit}
\end{figure}

\subsection{Useful lemmas}
Here we present some supplementary lemmas which will be useful in deriving and understanding our main results.  

For convenience we start by quoting two standard results and point the reader to \cite{wilde2017} for further details.

\begin{suplemma}[Tracial matrix H\"older's inequality \cite{baumgartner2011inequality}]\label{suplem:pinsker} 
Consider two $d\times d$ matrices $A$ and $B$. Then we have
\begin{equation}
    \left|\operatorname{Tr} A^{\dag} B\right| \leq\|A\|_{r}\,\|B\|_{s}, 
\end{equation}
for all $1 \leq r, s \leq \infty$ such that $ \frac{1}{r} + \frac{1}{s} = 1$.
\end{suplemma}

\begin{suplemma}[Pinsker's inequality \cite{ohya2004quantum}] 
Consider two quantum states $\rho,\sigma \in \mathcal{S}(\mathcal{H})$. Then, the quantum relative entropy $D(\rho \| \sigma)$ is lower bounded as
\begin{equation}
    D(\rho \| \sigma) \geq \frac{1}{2 \ln 2}\|\rho-\sigma\|_{1}^{2}\,.
\end{equation}
\end{suplemma}

Now we present a collection of lemmas on the effect of noise and unitary operations on quantum states and their Pauli coefficients.

\begin{suplemma}[Pauli coefficients under unitary transformations] \label{suplem:unitaries}
   Let $\Lambda$ be an $n$-qubit operator whose Pauli basis decomposition is 
   \begin{align}\label{eq:paulidecomposition_supp}
        \Lambda =  \lambda_0 \emph{\id}^{\otimes n} + \vec{\lambda}\cdot\vec{\sigma}_n,
        \end{align}
    where $\lambda_0 \in \mathbb{R}$ and $\vec{\lambda} \in \mathbb{R}^{4^{n}-1}$. Then $\|\vec{\lambda}\cdot \vec{\sigma}_n\|_p$ is invariant under the unitary transformation $\Lambda \rightarrow U\Lambda U^\dag$ for any unitary operator $U$. In particular, this also implies $\|\vec{\lambda}\|_2$ is invariant under unitary transformations.       
\end{suplemma}
\begin{proof}
    Denote the new Pauli coefficients after unitary transformation as $\vec{\lambda}'$. We note the first term in Eq.~\eqref{eq:paulidecomposition_supp} is invariant under such transformations. Thus
    \begin{equation}
        \vec{\lambda}'\cdot \vec{\sigma}_n = U(\vec{\lambda}\cdot \vec{\sigma}_n)U^\dag
    \end{equation}
    and so unitary invariance of $\|\vec{\lambda}\cdot \vec{\sigma}_n\|_p$ follows from the unitary invariance of Schatten norms. To see the unitary invariance of $\|\vec{\lambda}\|_2$ note that for $p=2$ we have
    \begin{align}
        \|\vec{\lambda}\cdot \vec{\sigma}_n\|_2 &= \sqrt{\Tr\left[ (\vec{\lambda}\cdot \vec{\sigma}_n)^2 \right] } \\
        &= \sqrt{\Tr\big[\vec{\lambda}\cdot\vec{\lambda\,}{\id^{\otimes n}}\big]} \\
        &= {2^{n/2}}\|\vec{\lambda}\|_2\,\
    \end{align}
    and thus unitary invariance of $\|\vec{\lambda}\cdot \vec{\sigma}_n\|_2$ implies unitary invariance of $\|\vec{\lambda}\|_2$.
\end{proof}

\begin{suplemma}[Pauli coefficients under noise] \label{suplem:noise}
    Consider an operator $\Lambda$ of the form \eqref{eq:paulidecomposition_supp}. Under the action of a noise channel $\mathcal{N}$ of the form in Eq.~\eqref{eq:noisemodelSupp} on a single Pauli string $\sigma^i_n$ we have
    \begin{equation}
        |\widetilde{\lambda}_i| \leq q |\lambda_i|\,,
    \end{equation}
    where we define $\widetilde{\lambda}_i$ as the coefficient such that $\widetilde{\lambda}_i\, \sigma^i_n = \mathcal{N}(\lambda_i\sigma^i_n)$.
\end{suplemma}
\begin{proof}
    The effect of a single layer of noise can be expressed as follows:
    \begin{equation}\label{eq:pauli-coefficient-decay}
        \lambda_i \xrightarrow[]{\,\mathcal{N}\,} q_X^{x(i)}q_Y^{y(i)}q_Z^{z(i)} \lambda_i \,,
    \end{equation}
    for all $i \in [4^n-1]$, where $x(i)+y(i)+z(i)\leq n$ is the number of non-identity terms in the $i$-th Pauli string.
    Noting that $x(i)+ y(i)+z(i)\geq 1 \,\;\forall i$ and using Eq.~\eqref{eq:qmax}, we obtain the desired statement. 
\end{proof}

\begin{suplemma}[Relative entropy contraction; \textit{Müller-Hermes/França/Wolf} \cite{muller2016relative}\textit{, Theorem 6.1}]\label{suplem:relentropy}
    Consider a channel
    \begin{equation}
       \mathcal{W} = \mathcal{U}_k\circ \mathcal{N}\circ \cdots \circ \mathcal{N} \circ \mathcal{U}_2 \circ \mathcal{N} \circ \mathcal{U}_1\circ\mathcal{N}
    \end{equation}
    that consists of $k$ noise channels $\NC= \NC_1 \otimes ... \otimes \NC_n$ where each $\NC$ is a depolarizing noise channel with depolarizing probability $p$, interleaved with unitary channels $\mathcal{U}_i$.
    Denote the relative entropy as $D\big(\cdot\big\Vert\cdot\big)$. We have 
    \begin{equation}
      D\bigg(\mathcal{W}(\rho)\bigg\Vert \frac{\emph{\id}^{\otimes n}}{2^n} \bigg) \leq (1-p)^{2k} D\bigg(\rho\bigg\Vert \frac{\emph{\id}^{\otimes n}}{2^n} \bigg) \leq (1-p)^{2k} n.
    \end{equation}
\end{suplemma}

We remark that depolarizing noise of probability $p$ is a special case of our noise model with $q=(1-p)$. In this special case, we can use the above lemma to rederive slightly stronger bounds for Theorem \ref{thm1} and Proposition \ref{prop1}. For the vast majority of our results, we shall instead use the following lemma, which extends the scope to a more general class of Pauli noise channels, at a cost of a slightly weaker bound. 

\begin{suplemma}[Sandwiched 2-R\'enyi relative entropy contraction] \label{suplem:2renyientropy}
Consider a single instance of the noise channel $\NC = \NC_1 \otimes ... \otimes \NC_n$ where each local noise channel $\{\NC_j\}_{j=1}^n$ is a Pauli noise channel that satisfies \eqref{eq:noisemodelSupp}. Then, we have
\begin{equation}\label{eq:2renyi}
    D_{2}\!\left(\NC(\rho) \Big\| \frac{\emph{\id}^{\otimes n}}{2^n}\right) \leq q^{2c} D_{2}\!\left(\rho \Big\| \frac{\emph{\id}^{\otimes n}}{2^n}\right)\,.
\end{equation}
where $D_2\big(\cdot\big\Vert\cdot\big)$ denotes the sandwiched 2-R\'enyi relative entropy, and $c=1/(2\ln 2)$ is a constant.
\end{suplemma}

\begin{proof}
This lemma comes as a direct consequence of Corollary 5.6 of Ref.~\cite{hirche2020contraction}. Let us first restate the result for convenience: For some density operator $\sigma$ and $p>0$ consider the channel $\AC_{p,\sigma}(\cdot)=p (\cdot)+(1-p) \sigma$. Suppose that some other channel $\BC$ satisfies
\begin{equation}
    \left\|\Gamma_{\BC(\sigma)}^{-\frac{1}{2}} \circ \BC \circ \AC_{p,\sigma}^{-1} \circ \Gamma_{\sigma}^{\frac{1}{2}}\right\|_{2 \rightarrow 2} \leq 1\,
\end{equation}
where $\AC_{p,\sigma}^{-1}$ denotes the inverse map of $\AC_{p,\sigma}$ and $\Gamma_{\sigma}^{p}$ denotes the map $\Gamma_{\sigma}^{p}(\cdot)=\sigma^{\frac{p}{2}} (\cdot) \sigma^{\frac{p}{2}}$. Then, for all states $\rho$, 
\begin{equation}\label{eq:2renyi}
    D_{2}\!\left(\BC^{\otimes n}(\rho) \| \BC^{\otimes n}(\sigma^{\otimes n})\right) \leq \alpha(p, \sigma) D_{2}\!\left(\rho \| \sigma^{\otimes n}\right)
\end{equation}
where $\alpha(p, \sigma)=\exp \left(2\left(1-\left\|\sigma^{-1}\right\|^{-1}\right) \frac{\log (p)}{\log \left(\left\|\sigma^{-1}\right\|\right)}\right)$. If one chooses $\AC_{p,\sigma}$ to be the depolarizing channel $\DC_{p_d}$ with depolarizing probability $p_d$, then $\sigma = \id/2$ and $p=(1-p_d)$. 
Eq.~\eqref{eq:2renyi} implies that if some qubit channel $\BC$ satisfies
\begin{equation}\label{eq:ball}
    \left\|\BC \circ \DC_{p_d}^{-1} \right\|_{2 \rightarrow 2} \leq 1\,.
\end{equation}
then for any $n$-qubit state $\rho$ we have
\begin{equation}\label{eq:2renyi2}
    D_{2}\!\left(\BC^{\otimes n}(\rho) \Big\| \frac{{\id}^{\otimes n}}{2^n}\right) \leq (1-p_d)^{1/\ln 2} D_{2}\!\left(\rho \Big\| \frac{{\id}^{\otimes n}}{2^n}\right)\,.
\end{equation}

Now suppose that $\BC$ is the qubit Pauli noise channel $\BC$ as defined in \eqref{eq:noisemodelSupp}. We can explicitly write the condition \eqref{eq:ball} as
\begin{equation}\label{eq:ball2}
    \sup_{X\neq 0} \frac{\|\BC\circ\DC_{p_d}^{-1}(X) \|_2}{\|X \|_2} \leq 1.
\end{equation}
We note that the superoperator (Pauli transfer matrix) of the concatenated channel $\BC\circ\DC_{p_d}^{-1}$ is diagonal with diagonal entries $(1,\frac{q_x}{1-p_d},\frac{q_y}{1-p_d},\frac{q_z}{1-p_d})$. Consider an arbitrary complex matrix $X$ decomposed in the Pauli basis as $X = a\id + \Vec{b}\cdot\Vec{\sigma}$, where $\Vec{\sigma}$ is the vector of Pauli matrices and $\Vec{b}$ is a vector of complex coefficients. Then one can verify 
\begin{align}
    \|X \|_2 &= \sqrt{2}\sqrt{|a|^2 + \textstyle\sum_i |b_i|^2}\,, \\
    \|\BC\circ\DC_p^{-1} (&X) \|_2 = \sqrt{2}\sqrt{|a|^2 + \textstyle\sum_i\left(\frac{q_i}{1-p_d}\right)^2|b_i|^2} \,,
\end{align}
where the second equality is obtained by reading off the diagonal entries of the superoperator of $\BC\circ\DC_{p_d}^{-1}$. In order to satisfy condition \eqref{eq:ball2}, one can pick
\begin{equation}
    1-p_d=\max_{i \in \{X,Y,Z\}} |q_i|\,.
\end{equation}
Thus, by denoting $q = \max_{i \in \{X,Y,Z\}} |q_i|$ and inspecting \eqref{eq:2renyi2} we obtain the result as required.
\end{proof}

\begin{suplemma}[Pauli coefficients through noisy circuit]\label{suplem:noise+unitaries}

Consider the output state of a depth-$l$ noisy circuit as in Eq.~\eqref{eq:paulibreakdown}, with vector of Pauli coefficients $\vec{a}^{(l)}$. This satisfies
\begin{equation}
    \big\|\vec{a}^{(l)} {\cdot}\, \vec{\sigma}_n\big\|_1 \leq 2^n \sqrt{2 \ln{2} \cdot n}\, q^{cl}\,,
\end{equation}
where $c=1/(2\ln 2)$ is a constant.
    
\end{suplemma}
\begin{proof}
We recall that we denote the state obtained at the end of the depth-$l$ circuit as the form in Eq.~\eqref{eq:paulibreakdown-supp}. With this we can write
\begin{align}
    \frac{1}{2^n}\big\|\vec{a}^{(l)} {\cdot}\, \vec{\sigma}_n\big\|_1 &\leq \sqrt{2 \ln 2\cdot D\big(\rho_l \big\| \frac{\id}{2^n}\big)}\\
    &\leq \sqrt{2 \ln 2\cdot D_2\big(\rho_l \big\| \frac{\id}{2^n}\big)}\\
    &\leq \sqrt{2 \ln 2\cdot q^{2cl} D_2\big(\rho_0 \big\| \frac{\id}{2^n}\big)}\\
    &\leq \sqrt{2 \ln{2} \cdot  q^{2cl}\, n}\,,
\end{align}
where the first inequality is due to Pinsker's inequality (Supplementary Lemma \ref{suplem:pinsker}); the second inequality is due to the monotonicity of the sandwiched 2-R\'enyi relative entropy; the third inequality follows from $l$ applications of Supplementary Lemma \ref{suplem:2renyientropy} and unitary invariance of the sandwiched 2-R\'enyi relative entropy; and in the final inequality we use a generic upper bound of the sandwiched 2-R\'enyi relative entropy.
\end{proof}

We note that as $\rho_l - \id/{2^n} = \vec{a}^{(l)} {\cdot}\, \vec{\sigma}_n/2^n $, this result can be thought of as a statement on the concentration of the trace distance of $\rho_l$ from the maximally mixed state. Finally, we present a lemma on commutators, which we will use to derive our result for gradients.

\begin{suplemma}[Commutator with Hermitian self-inverse operators; \textit{Li et al.}~\cite{li2017hybrid}\textit{, Equation 4}]\label{lem:commutator}
For any Hermitian self-inverse  operator $P$ which generates unitary $U_{P}(\theta) = \exp(-i\theta P/2)$ and any operator $A$ we have
\begin{equation}
    [P,A] = i\left( U_{P}\!\left(\frac{\pi}{2}\right)A\; U_{P}\!\left(\frac{\pi}{2}\right)^{\dag} - U_{P}\!\left(-\frac{\pi}{2}\right)A\; U_{P}\!\left(-\frac{\pi}{2}\right)^{\dag} \right)\,.
\end{equation}
\end{suplemma}

\section*{Supplementary Note 2 - Proof of Theorem \ref{thm1}}\label{sec:thm1proof-supp}

Here we provide the proof for our main result of Theorem~\ref{thm1}, which we now recall for convenience.

\begin{theorem}[Upper bound on the partial derivative]\label{thm1si}
    Consider an $L$-layered ansatz as defined in Eq.~\eqref{eq:ansatzSupp}. Let $\theta_{l m}$ denote the trainable parameter corresponding to the Hamiltonian $H_{l m}$ in the unitary $U_l(\vec{\theta}_l)$ appearing in the ansatz. Suppose that local Pauli noise of the form in Eq.~\eqref{eq:noisemodelSupp} with noise parameter $q$ acts before and after each layer as in Fig.~\ref{fig:suppcircuit}. Then the following bound holds for the partial derivative of the noisy cost function
    \begin{equation}\label{eq:th1sib}
       |\partial_{m} \widetilde{C}|  \leq F(n)\,,
       \end{equation}
       where
      \begin{equation}\label{eq:bound-thmsupp}
         F(n)= \sqrt{8\ln{2}}\, N_O \big\|\vec{\omega}\big\|_\infty \big\|\vec{\eta}_{lm}\big\|_1 n^{1/2} q^{cL+1}  \,,
    \end{equation}
    and $\vec{\eta}_{lm}$ and  $\vec{\omega}$ are defined in Eq.~\eqref{eq:HamiltonianDef}, $N_O$ is the number of non-zero Pauli terms in $O$, and and $c=1/(2\ln 2)$ is a constant.
\end{theorem}

\begin{proof}
    We write the overall channel that the state undergoes before measurement as the concatenation of two channels:
    \begin{equation}
        \mathcal{N}\circ \mathcal{U}_L(\vec{\theta}_L)\circ \cdots \circ \mathcal{N} \circ \mathcal{U}_2(\vec{\theta}_2) \circ \mathcal{N} \circ \mathcal{U}_1(\vec{\theta}_1)\circ\mathcal{N} \, (\cdot) = \mathcal{N} \circ \mathcal{W}_{+} \circ \mathcal{W}_{-} (\cdot) \,,
    \end{equation}
    where 
    \begin{align}
         \mathcal{W}_{-} &= \mathcal{U}^{-}_{m}(\vec{\theta}_l) \circ \mathcal{N} \circ \mathcal{U}_{l-1}(\vec{\theta}_{l-1}) \circ \cdots \circ \mathcal{N} \circ \mathcal{U}_1(\vec{\theta}_1) \circ \mathcal{N}, \\
         \mathcal{W}_{+} &=  \mathcal{U}_L(\vec{\theta}_L) \circ \cdots \circ \mathcal{U}_{l+1}(\vec{\theta}_{l+1}) \circ \mathcal{N} \circ \mathcal{U}^{+}_{m}(\vec{\theta}_l) \,,
    \end{align}
    with $\mathcal{W}_{-}$ containing $l$ layers of noise channels, and $\mathcal{W}_{+}$ containing $L-l$ layers of noise channels.
    Here we have defined the unitary channels $\mathcal{U}^{-}_{m}(\vec{\theta}_l)$ and $\mathcal{U}^{+}_{m}(\vec{\theta}_l)$ that respectively correspond to the following unitaries:
    \begin{align}
        {U}^{-}_{m}(\vec{\theta}_l) = \prod_{s=1}^{m} e^{-i \theta_{ls} H_{ls}}\,, \quad \quad
        {U}^{+}_{m}(\vec{\theta}_l) = \prod_{s>m}^{} e^{-i \theta_{ls} H_{ls} }\,,
    \end{align}
such that ${U}^{+}_{m}(\vec{\theta}_l) { U}^{-}_{m}(\vec{\theta}_l) = {U}_{l}(\vec{\theta}_l)$.     
For simplicity of notation let us denote $\partial_{l m}  \widetilde{C} = \partial_{\theta_{l m}} \widetilde{C}$. We have
    \begin{equation} \label{eq:derivativecostSupp}
        \partial_{l m} \widetilde{C} = \mathrm{Tr}[O\, \partial_{l m}\rho_L]\,,
    \end{equation}
    with
    \begin{align}
        \partial_{l m} \,  \rho_L &= \partial_{l m} \, \big( \NC \circ \mathcal{W}_{+} \circ \mathcal{W}_{-} ({\rho}_{0}) \big) \\
        &= \NC \circ \mathcal{W}_{+}\big( \partial_{l m}\, {\rho}_{l-} \big) \,,
    \end{align}   
where we denote $\mathcal{W}_{-} ({\rho}_{0}) = {\rho}_{l-}$. Thus we can write the derivative of the noisy cost function as
    \begin{align}
        \big|\partial_{l m} \widetilde{C}\big| &= \big|\mathrm{Tr}\left[O\, \NC \circ \mathcal{W}_{+}(\partial_{l m}\, {\rho}_{l-}) \right]\big| \\
        &\leq \left\|\NC(O)\right\|_\infty\, \left\|\mathcal{W}_{+}\big( \partial_{l m}\, {\rho}_{l-} \big)\right\|_1\, \label{eq:Hölder_supp}
    \end{align}
where Eq.~\eqref{eq:Hölder_supp} comes from application of Hölder's inequality, and we have used the fact that $\NC$ is self-adjoint. We now upper bound both terms individually.

The first term can be bounded as 
    \begin{align}
         \big\|\mathcal{N}(O)\big\|_\infty &=  \big\|\mathcal{N}(\vec{\omega}\cdot\vec{\sigma}_n)\big\|_\infty \\
         &\leq N_O \max_i\big\|\mathcal{N}(\omega^i \sigma^i_n)\big\|_\infty\\
         &\leq N_O\, q \|\vec{\omega}\|_\infty\,. \label{eq:thmW}
    \end{align}
    The first inequality is due to the triangle inequality, and a maximization over a sum of terms; the second inequality follows from Supplementary Lemma \ref{suplem:noise}, and by noting that the eigenvalues of Pauli matrices are $\{+1,-1\}$. 

Second, let us upper bound on the $1$-norm of $\mathcal{W}_{+}\big( \partial_{l m}\, {\rho}_{l-} \big)$. We have
    \begin{align}
         \partial_{l m} \, {\rho}_{l-} &=  -iH_{l m} {\rho}_{l-} +  i {\rho}_{l-} H_{l m}\\
         &=  -i \big[H_{l m} \,,\,  {\rho}_{l-} \big]  \\
         &=  -i \sum_{j} \big[\eta_{lm}^j \sigma^j_n \,,\,  {\rho}_{l-} \big]  \\
         & = \sum_{j} \eta_{lm}^j \big[ \sigma^j_n \,,\,  {\rho}_{l-} \big] \\
         & = \sum_{j} \eta_{lm}^j \left( \VC_{\sigma^j_n}( {\rho}_{l-}) - \VC_{\sigma^j_n}^{\dag} ({\rho}_{l-}) \right)\,,
    \end{align}
where in the final line we have used Supplementary Lemma \ref{lem:commutator} we denote $\VC_{\sigma_j}$ and $\VC^{\dag}_{\sigma_j}$ as the unitary channels corresponding to the unitary operators $\exp(-i\pi \sigma_j/4)$ and $\exp(i\pi \sigma_j/4)$ respectively.

This enables us to write
\begin{align}
    \Big\| \mathcal{W}_{+}\big( \partial_{l m}\, \rho_{l-} \big)\Big\|_1 &= \bigg\|  \sum_{j} \eta_{lm}^j \Big( \mathcal{W}_{+}\circ\VC_{\sigma_j} ( \rho_{l-}) - \mathcal{W}_{+}\circ\VC_{\sigma_j}^{\dag} ( \rho_{l-}) \Big) \bigg\|_1 \\
    &\leq \sum_{j} \big| \eta_{lm}^j \big| \left(  \Big\|  \mathcal{W}_{+}\circ\VC_{\sigma_j} ( \rho_{l-}) - \frac{\id}{2^n} \Big\|_1 + \Big\| \mathcal{W}_{+}\circ\VC_{\sigma_j}^{\dag} ( \rho_{l-}) - \frac{\id}{2^n}  \Big\|_1 \right) \\
    &\leq \big\| \vec{\eta}_{lm}\big\|_1 \max_j  \left(  \Big\|  \mathcal{W}_{+}\circ\VC_{\sigma_j} \big(\frac{1}{2^n}{\vec{a}}^{(l-)}{\cdot}\,\vec{\sigma}_n\big) \Big\|_1 + \Big\| \mathcal{W}_{+}\circ\VC_{\sigma_j}^{\dag} \big(\frac{1}{2^n}{\vec{a}}^{(l-)}{\cdot}\,\vec{\sigma}_n\big)  \Big\|_1 \right) \\
    &= \big\| \vec{\eta}_{lm}\big\|_1 \max_j  \left(  \Big\|  \frac{1}{2^n}{\vec{a}}^{(L,j)}{\cdot}\,\vec{\sigma}_n \Big\|_1 + \Big\| \frac{1}{2^n}{\vec{a}}^{(L,j')}{\cdot}\,\vec{\sigma}_n \Big\|_1 \right)\\
    &\leq \sqrt{8\ln{2}} \,   \big\|\vec{\eta}_{lm}\big\|_1\, n^{1/2} q^{cL}\,, \label{eq:thmrho}
\end{align}
where in the first inequality we have added and subtracted $\id/2^n$ and implemented the triangle inequality; in the second inequality we have taken a maximization over $j$ and rewritten $\rho_{l-}$ in terms of its Pauli coefficients $\vec{a}^{(l-)}$; in the following equality we have denoted the Pauli coefficients of the output state of the depth $L$ circuit with gates $\VC_{\sigma_j}$ and $\VC_{\sigma_j}^{\dag}$ inserted in between as ${\vec{a}}^{(L,j)}$ and ${\vec{a}}^{(L,j')}$ respectively; and the final inequality is an implementation of Supplementary Lemma \ref{suplem:noise+unitaries}. We see that the concentration of the $1$-norm of the partial derivative here originates from the fact that it can be expressed as a linear combination of state concentrations of depth $L$ circuits.
 
Inserting \eqref{eq:thmW} and \eqref{eq:thmrho} into Eq.~\eqref{eq:Hölder_supp}, we finally obtain   
    \begin{equation} \label{eq:suppthmfinal}
        \big|\partial_{l m} \widetilde{C}\big| \leq \sqrt{8\ln{2}}\, N_O \big\|\vec{\omega}\big\|_\infty \big\|\vec{\eta}_{lm}\big\|_1 n^{1/2} q^{cL+1} \,,
    \end{equation}
as required.   
\end{proof}

\subsubsection{Stronger bound for low noise levels under more restrictive Pauli noise model}
We note that via alternative proof techniques one may obtain a similar bound to Theorem \ref{thm1} for a different class of local Pauli noise models, where the bound is stronger in the regime of low noise strength (i.e., large $q$) and relatively uniform local Pauli error probabilities (i.e., close to local depolarizing noise). The core idea is that certain qubit Pauli channels can be decomposed into a depolarizing channel with non-trivial depolarizing probability, followed by a different Pauli channel. 

Consider a unital Pauli channel $\PC_{p_x,p_y,p_z}$ whose action on qubit state $\rho$ takes the form
\begin{equation}\label{eq:pauli}
    \PC_{p_x,p_y,p_z}(\rho) = p_I\rho + p_x X\rho X + p_y Y\rho Y + p_z Z\rho Z\,. 
\end{equation}
where $(p_I, p_x,p_y,p_z)$ is a probability vector with each $p_I, p_x, p_y, p_z > 0$. It can then be checked (e.g.~by using the superoperator formalism) that 
\begin{equation}\label{eq:channel_decomposition}
    \PC_{p_x,p_y,p_z} = \PC_{p'_x,p'_y,p'_z} \circ \DC_{p}\,.
\end{equation}
is a valid decomposition of $\PC_{p_x,p_y,p_z}$, where $\DC_{p}$ is a depolarizing channel with depolarizing probability $p=4\min(p_I,p_x,p_y,p_z)$ and $\PC_{p'_x,p'_y,p'_z}$ is a Pauli channel with $p'_j = \frac{p_j-p/4}{1-p}$ for all $j \in \{I, X, Y, Z\}$ . The decomposition \eqref{eq:channel_decomposition} allows us to directly use Supplementary Lemma \ref{suplem:relentropy} along with the data-processing inequality to modify Supplementary Lemma \ref{suplem:noise+unitaries} to 
\begin{equation}
    \big\|\vec{a}^{(l)} {\cdot}\, \vec{\sigma}_n\big\|_1 \leq 2^n \sqrt{2 \ln{2}}\, n^{1/2} \hat{q}^{l}\,.
\end{equation}
where $\hat{q}=1-4\min(p_I, p_x, p_y, p_z)\geq q$, with equality for depolarizing noise. This modifies the result of Theorem \ref{thm1si} to 
\begin{equation} 
    \big|\partial_{l m} \widetilde{C}\big| \leq \sqrt{8\ln{2}}\, N_O \big\|\vec{\omega}\big\|_\infty \big\|\vec{\eta}_{lm}\big\|_1 n^{1/2} q\, \hat{q}^{L} \,.
\end{equation}
We note that set of Pauli noise models for which $p>0$ in \eqref{eq:channel_decomposition} is a strict subset of those for which $q<1$ in \eqref{eq:noisemodel}. Thus, Theorem \ref{thm1si} gives a non-trivial bound for a more general class of Pauli noise models. 

\section*{Supplementary Note 3 - Proof of Lemma \ref{lemma1}}\label{sec:prooflemma}

In this supplementary note we provide a proof for Lemma \ref{lemma1}. We note that this Lemma is derived by employing  techniques similar to those used in deriving Theorem~\ref{thm1si} in  Supplementary Note \ref{sec:thm1proof-supp}. 

\setcounter{lemma}{0}

\begin{lemma}[Concentration of the cost function] \label{lemma3-supp}
  Consider an $L$ layer ansatz of the form in Eq.~\eqref{eq:ansatzSupp}. Suppose that local Pauli noise of the form of Eq.~\eqref{eq:noisemodelSupp} with noise strength $q$ acts before and after each layer as in Fig.~\ref{fig:suppcircuit}. Then, for a cost function $\widetilde{C}$ of the form in Eq.~\eqref{eq:noisycostsupp}, the following bound holds:
  \begin{equation}
      \left\vert \widetilde{C} - \frac{1}{2^n}\Tr[O]\right\vert \leq \sqrt{2 \ln{2}}\,   N_O\,\|\vec{\omega}\|_\infty\, n^{1/2} q^{cL+1}\,,
  \end{equation}
  where $N_O$ is the number of Pauli terms in the measurement operator $O$, $\|\vec{\omega}\|_\infty$ is the largest coefficient of those terms, and $c=1/(2\ln 2)$ is a constant.
\end{lemma}

\begin{proof}
We denote the overall channel that the state undergoes before measurement as $\mathcal{W}$ (a series of $L+1$ layers of noise channels, interleaved with unitary channels). We can write
\begin{align}
    \widetilde{C} &= \Tr[ O\, \mathcal{W}(\rho)]\\
    & = \frac{1}{2^n}\left(\Tr\big[O\big] + \Tr\big[O\,\NC({\vec{{a}}}^{(L)}\cdot \vec{\sigma}_n )\big]\right),
\end{align}
where we have used the Pauli decomposition \eqref{eq:paulibreakdown-supp}. This decomposition enables us to write 
\begin{align}
    \left\vert \widetilde{C} - \frac{1}{2^n} \Tr[O]\right\vert &= \bigg\vert  \Tr\big[O\,\NC({\vec{{a}}}^{(L)}\cdot \vec{\sigma}_n )\big]\bigg\vert \\
    &\leq  \big\|\NC(O)\big\|_{\infty} \big\|{\vec{{a}}}^{(L)}\cdot \vec{\sigma}_n \big\|_1 \\
    &\leq \sqrt{2 \ln{2}}\,   N_O\,\|\vec{\omega}\|_\infty\, n^{1/2} q^{cL+1} \,,
\end{align}
where the first inequality uses H\"older's inequality, and the second inequality comes from application of \eqref{eq:thmW} and Supplementary Lemma \ref{suplem:noise+unitaries}.  
\end{proof}

\section*{Supplementary Note 4 - Proof of Remark \ref{remark1}}\label{sec:proofremark}

We here  present an extension to Theorem \ref{thm1} to the case when several parameters in the ansatz $U(\thv)$ are correlated. Here, by correlated, we mean they are fixed to be equal to each other~\cite{volkoff2021large}. Note that this is in contrast to the previously analyzed cases where we assumed that all parameters $\{\theta_{l m}\}_{l m}$ are independent. Specifically, Remark \ref{remark1} provides an upper bound on the partial derivative of the cost function with respect to a parameter that is degenerate in $\thv$. 

\setcounter{remark}{0}

\begin{remark}[Degenerate parameters]\label{rem:degparamsupp}
   Consider the ansatz defined in Eqs.~\eqref{eq:ansatzSupp}. Suppose there is a subset $G_{st}$ of the set $\{\theta_{l m}\}$ in this ansatz such that $G_{st}$ consists of $g$ parameters that are degenerate: 
\begin{equation}
    G_{st} = \big\{  \theta_{l m} \;|\; \theta_{l m}=\theta_{st} \big\}
\end{equation}        
Here, $\theta_{st}$ denotes the parameter in $G_{st}$ for which $\|\vec{\eta}_{lm} \|_1$ takes the largest value in the set. ($\theta_{st}$ can also be thought of as a reference parameter to which all other parameters are set equal in value.) Then the partial derivative of the noisy cost with respect to $\theta_{st}$ is bounded as

\begin{equation}\label{eqnDegenerateResultSupp}
    |\partial_{st} \widetilde{C}| \leq  \sqrt{8\ln{2}}\, g  N_O \|\vec{\omega}\|_\infty \big\|\vec{\eta}_{st} \big\|_1 \, n^{1/2} q^{cL+1} \,,
\end{equation} 
at all points in the cost landscape.
\end{remark}

\begin{proof}
Using arguments similar to those in Supplementary Note \ref{sec:thm1proof-supp}, we get
\begin{align}
    |\partial_{st} \widetilde{C}| &=
    \sum_{\theta_{hg} \in\, G_{st}} \big|\mathrm{Tr}[O\, \partial_{hg}\rho_L]\big| \\
    &\leq \sum_{\theta_{hg} \in\, G_{st}} \sqrt{8\ln{2}}\, N_O \|\vec{\omega}\|_\infty \big\|\vec{\eta}_{hg} \big\|_1 \,n^{1/2} q^{cL+1}
\end{align}
where the inequality was obtained from  Eq.~\eqref{eq:suppthmfinal}. Since there are $g$ terms in the summation, we have
\begin{equation}\label{eq:remark-dp-supp}
    |\partial_{st} \widetilde{C}| \leq  \sqrt{8\ln{2}}\, g N_O \|\vec{\omega}\|_\infty \big\|\vec{\eta}_{st} \big\|_1 \, n^{1/2} q^{cL+1} \,.
\end{equation}
\end{proof}

We note that the proof of Remark \ref{rem:degparamsupp} can be trivially generalized to the case when the parameters in $G_{st}$ are linear functions of the reference parameter.

\section*{Supplementary Note 5 - Proof of Remark \ref{remark:extension_noisemodel}}\label{sec:remark2-supp}

\begin{remark}[Extensions to the noise model]
   We can extend our noise model to include additional non-local noise models and obtain the same scaling results. First, we may consider global (unital) Pauli noise $\mathcal{P}$ whose action on $n$-qubit Pauli string $\sigma_n \in \{\emph{\id}, X, Y, Z \}^{\otimes n}$ can be written 
   \begin{equation}\label{eq:globalpaulisupp}
       \mathcal{P}(\sigma_n) = q_{\sigma_n}\sigma_n
   \end{equation}
   where $-1\leq q_{\sigma_n} \leq 1$ for all $\sigma_n$, and $q_{\emph{\id}^{\otimes n}} = 1$. Second, we can consider correlated coherent noise across multiple qubits of the form
   \begin{equation}\label{eq:correlatednoisesupp}
       \mathcal{V}(\rho) = V\rho V^\dag
   \end{equation}
   where $VV^\dag=V^\dag V=\emph{\id}^{\otimes n}$
   . We can then consider a modification of our noisy cost function \eqref{eq:noisycostsupp} as
   \begin{equation}
       \widetilde{C}\mapsto\widetilde{C}' = \mathrm{Tr}\Big[ O\; \big(\mathcal{N}_L\circ \mathcal{U}_L(\vec{\theta}_L) \circ \mathcal{N}_{L-1} \circ \cdots \circ \mathcal{U}_2(\vec{\theta}_2) \circ \mathcal{N}_{1} \circ \mathcal{U}_1(\vec{\theta}_1) \circ \mathcal{N}_0 \big) (\rho_0) \Big]
   \end{equation}
   with $\mathcal{N}_i = \mathcal{V}_i\circ\mathcal{P}_i\circ\mathcal{N}$ for all $i \in [0,L]$, where $\mathcal{P}_i$ and $\mathcal{V}_i$ are specific instances of global Pauli noise and correlated noise of the form of \eqref{eq:globalpaulisupp} and \eqref{eq:correlatednoisesupp} respectively. Under such a modification, the statements of Lemma \ref{lemma1}, Theorem \ref{thm1} and Corollary \ref{cor1} still remain valid.
\end{remark}

\begin{proof}
We can absorb the $\mathcal{V}_i$ channels into the parameterized unitaries and write
\begin{equation}
       \widetilde{C}' = \mathrm{Tr}\Big[ V_L^\dag OV_L\; \big(\mathcal{N}'_L\circ \mathcal{Y}_L(\vec{\theta}_L) \circ \mathcal{N}'_{L-1} \circ \cdots \circ \mathcal{Y}_2(\vec{\theta}_2) \circ \mathcal{N}'_{1} \circ \mathcal{Y}_1(\vec{\theta}_1) \circ \mathcal{N}'_0 \big) (\rho_0) \Big]
\end{equation}
where $\mathcal{Y}_i(\vec{\theta}_i) = \mathcal{U}_i(\vec{\theta}_i)\circ V_{i-1}$ and  $\mathcal{N}'_i = \mathcal{P}_i\circ\mathcal{N}$ for all $i \in [0,L]$. For any operator of the form \eqref{eq:paulidecomposition_supp}, the effect of noise channel $\mathcal{N}_j$ is to map Pauli coefficients as
\begin{equation}
    \lambda_i \xrightarrow[]{\,\mathcal{N}_j\,} q_X^{x(i)}q_Y^{y(i)}q_Z^{z(i)} \lambda_i \,,
\end{equation}
for all $i \in [1,2^n-1]$, $j \in [0,L]$, where we can write $q=\max\{|q_X|,|q_Y|,|q_Z|\}<1$. In addition, the results of Supplementary Lemma \ref{suplem:noise+unitaries} are unchanged under the map $O \mapsto V_L^\dag OV_L$. Thus, the above proofs proceed the same under such an extended noise model.
\end{proof}

\section*{Supplementary Note 6 - Proof of Corollaries \ref{cor:qaoa} and \ref{cor:ucc} }\label{sec:cor2sup}

We first provide a proof of Corollary \ref{cor:qaoa}.  We start by recalling that in the QAOA one sequentially alternates the action of two unitaries as    
\begin{equation}\label{eq:QAOAsupp}
    U(\vec{\gamma},\vec{\beta})= e^{-i \beta_{p} H_M}  e^{-i \gamma_{p} H_P}\cdots e^{-i \beta_{1} H_M}  e^{-i \gamma_{1} H_P} \,,
\end{equation}
where $H_P$ and $H_M$ are the so-called problem and mixer Hamiltonian, respectively. We define $N_P\,(N_M)$ the number of terms in the Pauli decompositions of $H_P\,(H_M)$.

\setcounter{corollary}{1}
\begin{corollary}[Example: QAOA]\label{cor2sup}
Consider the QAOA with $2p$ trainable parameters, as defined in Eq.~\eqref{eq:QAOAsupp}. Suppose that the implementation of unitaries corresponding to the problem Hamiltonian $H_P$ and the mixer Hamiltonian $H_M$ require $k_P$- and $k_M$-depth circuits, respectively.
If local Pauli noise of the form in Eq.~\eqref{eq:noisemodelSupp} with noise parameter $q$ acts before and after each layer of native gates, then we have
    \begin{align}
        |\partial_{\beta_l} \widetilde{C}| \, &\leq
         \sqrt{8\ln{2}}\, {g_{l,P}}N_P \|\vec{\omega}\|_\infty \big\|\vec{\eta}_{P} \big\|_1  n^{1/2} q^{c(k_P+k_M)p+1}\,, \label{eq:qaoabound1supp}\\
        |\partial_{\gamma_l} \widetilde{C}| \, &\leq  \sqrt{8\ln{2}}\, {g_{l,M}}N_P   \|\vec{\omega} \|_\infty \big\|\vec{\eta}_{M} \big\|_1 \,n^{1/2}  q^{c(k_P+k_M)p+1}\,, \label{eq:qaoabound2supp}
    \end{align}
    for any choice of parameters $\beta_l,\gamma_l$, and where $O=H_P$ in Eq.~\eqref{eq:OSupp}. Here $b_{l,P}$ and $b_{l,M}$ are respectively the number of native gates parameterized by $\beta_l$ and $\gamma_l$ according to the compilation.
\end{corollary}

\begin{proof}
We now treat each layer of native hardware gates as a unitary layer as in Fig.~\ref{fig:suppcircuit}, which gives $L = (k_P+k_M)p$.
In Eq.~\eqref{eq:remark-dp-supp} we have $\|\vec{\eta}_{\beta_l} \|_1 \leq \|\vec{\eta}_{P} \|_1$, $\|\vec{\eta}_{\gamma_l} \|_1 \leq \|\vec{\eta}_{M} \|_1$, assuming Trotterization.
Then Corollary \ref{cor2sup} follows by invoking Remark \ref{rem:degparamsupp}. 
\end{proof}


Now let us provide a proof for Corollary \ref{cor:ucc}. We recall that the UCC ansatz can be expressed as
\begin{equation}\label{eq:uccsupp}
        U(\vec{\theta}) =  \prod_{l m} U_{l m}(\theta_{l m})= \prod_{l m} e^{i  \theta_{l m} \sum_k   \mu^k_{l m}  \sigma^k_n}, 
\end{equation}
where $\mu^k_{l m} \in \{0,\pm1\}$, and where $\theta_{l m}$ are the coupled cluster amplitudes. Moreover, we denote $\widehat{N}_{l m} =\|\vec{\mu}_{l m}\|_1$ as the number of non-zero elements in  $\sum_k  \mu^k_{l m}  \sigma^k_n$. 

\begin{corollary}[Example: UCC] \label{cor:uccsupp}
Let $H$ denote a molecular Hamiltonian of a system of $M_e$ electrons. Consider the UCC ansatz as defined in Eq.~\eqref{eq:uccsupp}.  If local Pauli noise of the form in Eq.~\eqref{eq:noisemodelSupp} with noise parameter $q$ acts before and after every $U_{l m}(\theta_{l m})$ in Eq.~\eqref{eq:uccsupp}, then we have
\begin{align}\label{eq:ucc-corsupp}
 |\partial_{\theta_{l m}} \widetilde{C}| \leq \sqrt{8\ln{2}}\,\widehat{N}_{l m} N_H \Vert \vec{\omega}\Vert_{\infty} \,n^{1/2} q^{cL+1}, 
\end{align}
for any coupled cluster amplitude $\theta_{l m}$,  and where $O=H$ in  Eq.~\eqref{eq:noisycostsupp}. 
\end{corollary}

\begin{proof}
Using the first-order Troterrization, the UCC ansatz can be represented as follows: 
\begin{align}\label{eq:trotterized-ucc-supp}
    U(\vec{\theta}) = \prod_{l m} \prod_k  e^{i  \theta_{l m}     \mu^k_{l m}  \sigma^k_n},
\end{align}
which is in the form of an ansatz that has correlated parameters. Then from Remark \ref{rem:degparamsupp} it follows that 
\begin{align}
 |\partial_{\theta_{l m}} \widetilde{C}| \leq \sqrt{8\ln{2}}\,\widehat{N}_{l m} N_H \Vert \vec{\omega}\Vert_{\infty} \,n^{1/2} q^{cL+1}, 
\end{align}
where we used the fact that in Eq.~\eqref{eq:remark-dp-supp} $g=\widehat{N}_{l m}$, $\Vert \vec{\eta}_{st}\Vert_{1} =1$ for the UCC ansatz as in Eq.~\eqref{eq:trotterized-ucc-supp}. 
\end{proof}

\section*{Supplementary Note 7 - Proof of Remark \ref{remark:QML}}\label{sec:remark:QML-supp}

Suppose our noiseless cost function is instead 
\begin{equation} \label{eq:C_train}
    C_{\textrm{train}}= \sum_i \Tr[O_i U(\thv)\rho_i U\ad(\thv)]
\end{equation}
for a training set of data encoded in states $\{ \rho_i\}$ and set of operators $\{ O_i\}$ each of the form \eqref{eq:OSupp}. Our result then generalizes to
\begin{equation}
    \big|\partial_{l m} \widetilde{C}_{\textrm{train}}\big| \leq \sqrt{8\ln{2}}\, \Big(\sum_i N_{O_i} \big\|\vec{\omega}_i\big\|_\infty \Big) \big\|\vec{\eta}_{lm}\big\|_1 n^{1/2} q^{cL+1} \,.
\end{equation}
\begin{proof}
We can denote each term in the sum \eqref{eq:C_train} as $C_i=\Tr[O_i U(\thv)\rho_i U\ad(\thv)]$ such that 
\begin{equation}
    C_{\textrm{train}}= \sum_i C_i,\quad \widetilde{C}_{\textrm{train}} = \sum_i \widetilde{C}_i\,.
\end{equation}
Then, we can simply write
\begin{align}
    \big|\partial_{l m} \widetilde{C}_\textrm{train}\big| &= \Big| \sum_i \partial_{l m} \widetilde{C}_i \Big| \\
    &\leq \sum_i \big|\partial_{l m} \widetilde{C}_i \big|\,,
\end{align}
where each $\big|\partial_{l m} \widetilde{C}_i \big|$ can be bounded by Theorem \ref{thm1si} giving the result as required.
\end{proof}

\section*{Supplementary Note 8 - Proof of Proposition \ref{prop1} }\label{sec:prop1si}

In this supplementary note we provide a proof of Proposition \ref{prop1}, which we now recall for convenience.

\setcounter{proposition}{0}
\begin{proposition}[Measurement noise]\label{prop1si}
    Consider the expansion of the observable $O$ as a sum of Pauli strings, as in Eq.~\eqref{eq:HamiltonianDef}. Let $w$ denote the minimum weight of these strings, where the weight is defined as the number of non-identity elements for a given string. In addition to the noise process considered in Fig.~\ref{fig:suppcircuit}, suppose there is also measurement noise consisting of a tensor product of local bit-flip channels with  bit-flip probability $(1-q_M)/2$. Then we have
    \begin{equation}\label{eq:prop-b13}
        \left\vert\widetilde{C} - \frac{1}{2^n}\Tr\,O\right\vert \leq q_M^{w}\, G(n)\,,
    \end{equation}
    and
    \begin{equation}
        |\partial_{l m} \widetilde{C}| \leq q_M^{w} F(n) \,,
    \end{equation}
    where $G(n)$ and $F(n)$ are defined in Lemma~\ref{lemma1} and Theorem~\ref{thm1}, respectively.
\end{proposition}

\begin{proof}
We prove in detail the proposition about the gradient of the cost function. The proposition about the cost function is derived in an analogous manner.

As a model of measurement noise we consider a classical bit-flip channel applied to every qubit, such that the standard POVM elements get replaced by:
\begin{align}
    P_0 = \dya{0} &\rightarrow \widetilde{P}_0 = p_{00} \dya{0} + p_{01} \dya{1} \\
    P_1 = \dya{1} &\rightarrow \widetilde{P}_1 = p_{10} \dya{0} + p_{11} \dya{1} ,
\end{align}
where $p_{00} + p_{01} = 1$ and $p_{10} + p_{11} = 1$. Furthermore, we take this channel to be unital, such that $\widetilde{P}_0 + \widetilde{P}_1 = (p_{00} + p_{10}) P_0 + (p_{01} + p_{11}) P_1 = P_0 + P_1$ giving $p_{00} + p_{10} = 1$ and $p_{01} + p_{11} = 1$. Thus, there is only one free parameter $q_M$, and we set $p_{00} = p_{11} = \frac{1 + q_M}{2}$, $p_{01} = p_{10} = \frac{1 - q_M}{2}$. Note that without loss of generality we can assume $p_{00}, p_{11} > 1/2$, and hence $q_M \geq 0$. Overall:
\begin{align}
    P_0 = \dya{0} &\rightarrow \widetilde{P}_0 = \frac{1+q_M}{2} \dya{0} + \frac{1-q_M}{2} \dya{1} \\
    P_1 = \dya{1} &\rightarrow \widetilde{P}_1 = \frac{1-q_M}{2} \dya{0} + \frac{1+q_M}{2} \dya{1} \,.
\end{align}
The equivalence between this classical channel and a quantum bit-flip channel is seen by writing $P_0 = \frac{\id + Z}{2}$ and $P_1 = \frac{\id - Z}{2}$, such that the bit-flip channel is equivalent to a transformation of the Pauli $Z$ operator: ${Z}' = q_M Z$. This corresponds to the effect of a bit-flip channel $\mathcal{N}(\rho) = \frac{1 + q_M}{2} \rho + \frac{1 - q_M}{2} X\rho X$. 

The reasoning so far only applies to measurements in the $Z$ basis. However, in our model we do not consider a standard projective measurement, but the expectation value with respect to a general Hermitian operator. This assumes the capability of performing measurements in any basis. If we assume that the classical bit-flip acts independently of the basis we choose to measure in, then we see that the corresponding quantum channel must be a depolarizing channel such that
\begin{equation}\label{eq:actionmpn}
    \mathcal{D}_M(\sigma)=q_M\sigma\,,
\end{equation}
where $\sigma$ is any single-qubit Pauli operator. An alternative realistic assumption that also leads to \eqref{eq:actionmpn} is that the quantum computer can only measure in the computational basis, and so one implements measurements in general bases by applying an extra layer of (noisy) one-qubit rotations before measurement. We thus proceed to model measurement noise as a tensor product of such local depolarizing channels applied prior to measurement and denote the overall channel as $\mathcal{N}_M$. From Eq.~\eqref{eq:actionmpn} we have that 
\begin{equation}\label{eq:actionmpn2}
    \mathcal{N}_M(O)=\sum_i \omega^i\mathcal{N}_M(\sigma_n^i)=\widetilde{\vec{\omega}}\cdot \vec{\sigma}_n\,,
\end{equation}
where $\widetilde{\vec{\omega}}$ is a vector of elements $\widetilde{\omega}^i=q_M^{w(i)} \omega^i$, and where $w(i)=x(i) + y(i) + z(i)$ is the weight of the Pauli string. Here we recall that we have respectively defined $x(i)$, $y(i)$ and $z(i)$ as the number of Pauli operators $X$, $Y$, and $Z$ in the $i$-th Pauli string. 
Let us now write the noisy cost function partial derivative as:
\begin{align}
    \partial_{l m} \widetilde{C} &= \text{Tr} \left[\mathcal{N}_M(O)\partial_{l m}\rho_L\right] \,.
\end{align}
Proceeding in the same way as in the proof of Theorem \ref{thm1si}, we write
\begin{align}\label{eq:Hölder_supp2}
    |\partial_{l m} \widetilde{C}| \leq \left\|\NC\circ\mathcal{N}_M(O)\right\|_\infty\, \left\|\mathcal{W}_{+}\big( \partial_{l m}\, {\rho}_{l-} \big)\right\|_1\,.
\end{align}
The first term can be bounded as
\begin{align}
    \left\|\NC\circ\mathcal{N}_M(O)\right\|_\infty &= \left\|\NC\left( \widetilde{\vec{\omega}}\cdot \vec{\sigma}_n \right)\right\|_\infty \\
    &\leq N_O \max_i\big\|\mathcal{N}(\widetilde{\omega}^i \sigma^i_n)\big\|_\infty \\
    &\leq N_O\, q \min_{w(i)}  (q^{w(i)}_M) \|\vec{\omega}\|_{\infty} \\
    &= N_O\, q\, q^{w}_M \|\vec{\omega}\|_{\infty}\,,
\end{align}
where the first inequality is due to the triangle inequality, the second equality comes from noting that the maximization over the product of positive terms is bounded by the product of maximizations, and the final equality is an adpotion of the notations in the statement of the proposition. We see this result is identical to that in Theorem \ref{thm1si}, aside from an extra factor $q_M^{w}$.

The second term in \eqref{eq:Hölder_supp2} is bounded in the proof of Theorem \ref{thm1si} as
\begin{align}
    \left\| \mathcal{W}_{+}\big( \partial_{l m}\, {\rho}_{l-} \big)\right\|_1 \leq \sqrt{8\ln{2}} \,   \big\|\vec{\eta}_{lm}\big\|_1\, n^{1/2} q^{cL}\,. 
    \end{align}
Putting the two parts together we obtain 
\begin{align}
    \big|\partial_{l m} \widetilde{C}\big| &\leq \sqrt{8\ln{2}}\, N_O \big\|\vec{\omega}\big\|_\infty \|\vec{\eta}_{lm}\big\|_1 n^{1/2}\, q_M^{w}\, q^{cL+1} \\
    &= q_M^{w} F(n)\,,
\end{align}
as required.
\bigskip

Now let us prove the complimentary result for the cost function magnitude. Following the proof of Lemma \ref{lemma3-supp} we write
\begin{align}
    \left\vert \widetilde{C} - \frac{1}{2^n} \Tr[O]\right\vert &= \bigg\vert  \Tr\big[\NC\circ\NC_M(O)\,{\vec{{a}}}^{(L)}\cdot \vec{\sigma}_n \big]\bigg\vert \\
    &\leq  \big\|\NC\circ\NC_M(O)\big\|_{\infty} \big\|{\vec{{a}}}^{(L)}\cdot \vec{\sigma}_n \big\|_1 \\
    &\leq q_M^{w}\, N_O \big\|\vec{\omega}\big\|_\infty \sqrt{2 \ln{2} \cdot n}\, q^{cL+1} \,.
\end{align}
Thus we can write
\begin{align}
    \left\vert\widetilde{C} - \frac{1}{2^n} \Tr[O]\right\vert \leq q_M^{w}\, G(n)\,,
\end{align}
where $G(n) = \sqrt{2 \ln{2} }\, N_O \big\|\vec{\omega}\big\|_\infty n^{1/2} q^{cL+1}$ is the concentration factor in Lemma \ref{lemma3-supp}. Hence observables with $w \in \Omega(n)$ will suffer from an exponential decay in $n$ of the cost function and its gradient, independent of circuit depth.
\end{proof}

\end{document}